\documentclass[a4paper,USenglish, fleqn,draft]{llncs}

\usepackage{amsmath}
\usepackage{amssymb}

\usepackage[utf8]{inputenc}% erm\"oglich die direkte Eingabe der Umlaute 
\usepackage[T1]{fontenc} % das Trennen der Umlaute

\usepackage{todonotes}

\usepackage{wrapfig}
\usepackage{multirow}
\usepackage{todonotes}
\usepackage{booktabs}

\usepackage{tikz}
\usetikzlibrary{automata}
\usetikzlibrary{backgrounds}

\usepackage{pifont}
\usepackage{fancyvrb}

%%%%%%%%%%%%%

\newcommand{\myquot}[1]{``#1''}

%% numbers and words %%%%%%%%%%%%%%%%%%%%%%%%%%%%

\newcommand{\nats}{\mathbb{N}}
\newcommand{\natsplus}{\nats_+}
\newcommand{\size}[1]{|#1|}
\renewcommand{\epsilon}{\varepsilon}

\newcommand{\set}[1]{\{#1\}}
\newcommand{\widx}[2]{#1(#2)}

%% automata %%%%%%%%%%%%%%%%%%%%%%%%%%%%%%%%%%%%%

\newcommand{\aut}{\mathcal{A}}

%% games %%%%%%%%%%%%%%%%%%%%%%%%%%%%%%%%%%%%%%%%

\newcommand{\delaygame}[1]{\Gamma\!_{f}(#1)}
\newcommand{\delaygamep}[1]{\Gamma\!_{f'}(#1)}
\newcommand{\SigmaI}{\Sigma_I}
\newcommand{\SigmaO}{\Sigma_O}
\newcommand{\stratO}{\sigma}
\newcommand{\stratI}{\tau}
\newcommand{\p}{p}

%% complexity classes %%%%%%%%%%%%%%%%%%%%%%%%%%%
\newcommand{\up}{\textsc{UP}}
\newcommand{\coup}{\textsc{co-UP}}
\newcommand{\np}{\textsc{NP}}
\newcommand{\conp}{\textsc{co-NP}}

\newcommand{\exptime}{\textsc{ExpTime}}

%% notation for parity proof

\SaveVerb{sharp}|#|

\newcommand{\skipp}{\mathrm{skip}}
\newcommand{\skippsym}{{\triangleright}}

\newcommand{\SigmaOskipp}{\SigmaO^{\skippsym}}
\newcommand{\shift}{\mathrm{shift}}

\title{What are Strategies in  Delay Games?\newline Borel Determinacy for Games with Lookahead\thanks{Partially supported by the project ``TriCS'' (ZI~1516/1-1) of the German Research Foundation (DFG). The first author was supported by an IMPRS-CS PhD Scholarship.}}

%\titlerunning{What are Strategies in  Delay Games?}

\author{Felix Klein and Martin Zimmermann}

\institute{Reactive Systems Group, Saarland University, Germany\\
 \email{\{klein, zimmermann\}@react.uni-saarland.de}}

\begin{document}

\maketitle

\begin{abstract}
\noindent We investigate determinacy of delay games with Borel winning conditions, infinite-duration two-player games in which one player may delay her moves to obtain a lookahead on her opponent's~moves.

First, we prove determinacy of such games with respect to a fixed evolution of the lookahead. However, strategies in such games may depend on information about the evolution. Thus, we introduce different notions of universal strategies for both players, which are evolution-independent, and determine the exact amount of information a universal strategy needs about the history of a play and the evolution of the lookahead to be winning. In particular, we show that delay games with Borel winning conditions are determined with respect to universal strategies. Finally, we consider decidability problems, e.g., \myquot{Does a player have a universal winning strategy for delay games with a given winning condition?}, for $\omega$-regular and $\omega$-context-free winning conditions.
\end{abstract}

\section{Introduction}
\label{sec_intro}
Determinacy is the most fundamental property of a game: a game is determined, if one of the players has a winning strategy. One can even argue that a determinacy result paved the way for game theory: in 1913, Zermelo proved what is today known as Zermelo's theorem~\cite{Zermelo13}: every two-player zero-sum game of perfect information and finite duration is determined.

In this work, we are concerned with the infinite-duration variant of such games, so-called Gale-Stewart games. Such a game is played between Player~$I$ and Player~$O$ in rounds~$i \in \nats$: in round~$i$, Player~$I$ picks a letter $\alpha(i) \in \SigmaI$ and then Player~$O$ picks a letter \mbox{$\beta(i) \in \SigmaO$.} Player~$O$ wins, if the outcome~${\alpha(0) \choose \beta(0)}{\alpha(1) \choose \beta(1)}{\alpha(2) \choose \beta(2)} \cdots$ is in the winning condition~\mbox{$L \subseteq (\SigmaI \times \SigmaO)^\omega$}. Accordingly, a strategy for Player~$I$ is a function~$\stratI \colon \SigmaO^* \rightarrow \SigmaI$ mapping the previous moves of Player~$O$ to the next letter from $\SigmaI$ to be picked. The definition for Player~$O$ is dual. Note that a strategy cannot access the previous moves determined by itself. This is not a restriction, as they can always be reconstructed.

 Let $\rho(\stratI, \stratO)$ denote the outcome of the play where Player~$I$ employs the strategy $\stratI$ and Player~$O$ the strategy~$\stratO$. Then, determinacy can be characterized as follows: the negation~$\forall \stratO \exists \stratI.\, \rho(\stratI, \stratO) \notin L$ of $\exists \stratO \forall \stratI.\, \rho(\stratI, \stratO) \in L$ is equivalent to $\exists \stratI \forall \stratO.\, \rho(\stratI, \stratO) \notin L$, i.e., the order of the quantifiers can be swapped. 
 
Gale-Stewart games are an important tool in set theory and a long line of research into determinacy results for such games culminated in Martin's seminal Borel determinacy theorem~\cite{Martin75}: every Gale-Stewart game with a Borel winning condition is determined. On the other hand, using the axiom of choice, one can construct non-determined games. Even more so, determinacy of games with $\omega$-context-free conditions, which are not necessarily Borel, is equivalent to a large cardinal assumption that is not provable in ZFC~\cite{Finkel13}.

Gale-Stewart games also have important applications in theoretical computer science as they subsume games studied in automata theory, e.g., parity games and LTL realizability games. Showing the winning condition of a game to be Borel and then applying Martin's theorem is typically the simplest proof of determinacy for a novel winning condition. However, one can typically obtain stronger results, e.g., positional determinacy for parity games~\cite{EmersonJutla91,Mostowski91}. The quantifier swap induced by this determinacy result underlies (implicitly or explicitly) all complementation proofs for parity tree automata, the crucial step in proving decidability of monadic second-order logic over infinite trees. 
% 
%
%
%\begin{itemize}
%	\item History of Gale-Stewart Games: set theory, Banach-Mazur games (?)
%	\item Semi-formal definition, especially notion of strategy
%	\item History of determinacy results: discuss quantifier change and it's importance.
%	\item quickly mention trouble with determinacy for context-free conditions
%\end{itemize}

\medskip

\noindent\textbf{Delay Games.} Oftentimes, the strict alternation of moves in a Gale-Stewart game is too restrictive to model applications in computer science, e.g., in the presence of asynchronous components, buffers, or communication between components. Delay games, a relaxation of Gale-Stewart games, model such situations by allowing Player~$O$ to delay her moves in order to obtain a lookahead on her opponent's moves. This gives her an advantage and allows her to win games she would lose without lookahead. 

Furthermore, delay games have deep connections to uniformization problems for relations w.r.t.\ continuous functions~\cite{DBLP:conf/rex/ThomasL93,trakhtenbrot1973finite}. Consider a winning condition~$L \subseteq (\SigmaI \times \SigmaO)^\omega$: a winning strategy~$\stratO$ for Player~$O$ in a game with winning condition~$L$ induces a mapping $\lambda_\stratO \colon \SigmaI^\omega \rightarrow \SigmaO^\omega$ such that $\set{{\alpha \choose \lambda_\stratO(\alpha)} \mid \alpha \in \SigmaI^\omega} \subseteq L$: we say that $\lambda_\stratO$ uniformizes $L$. If $\stratO$ is winning for the Gale-Stewart game with winning condition~$L$, then $\lambda_\stratO$ is causal: the $n$-th letter of $\lambda_\stratO(\alpha)$ only depends on the first $n$ letters of $\alpha$. Furthermore, if $\stratO$ is winning in the delay game with winning condition~$L$ then $\lambda_\stratO$ is continuous in the Cantor topology. The latter result can even be refined: if $\stratO$ only delays moves a bounded number of times during each play, then $\lambda_\stratO$ is Lipschitz-continuous. Thus, uniformization problems w.r.t.\ (Lipschitz-)continuous functions are reducible to solving delay games. 

To capture and to analyze the precise amount of lookahead that is necessary to win, delay games are defined w.r.t.\ so-called delay functions, which represent the evolution of the lookahead. Thus, formally Player~$O$ does not decide to skip a move, but the delay function determines how many moves she skips: given a delay function~$f \colon \nats \rightarrow \natsplus$, the delay game~$\delaygame{L}$ is played in rounds, where in round~$i$ Player~$I$ has to pick $f(i)$ letters and afterwards Player~$O$ has to pick a single letter. Thus, if $f(i) > 1$, then Player~$O$'s lookahead increases by $f(i)-1$ letters. Typically, one is interested in the existence of a delay function~$f$ that allows Player~$O$ to win $\delaygame{L}$. One could imagine an alternative formalization where Player~$O$ may explicitly skip moves at her own choice. We will encounter this variant in Section~\ref{sec_skipresults}, where it is shown to be \emph{equivalent} to the one using delay functions.

Delay games where introduced by Hosch and Landweber who proved decidability of the existence of winning strategies with bounded lookahead for games with $\omega$-regular winning conditions~\cite{HoschLandweber72}. Later, Holtmann, Kaiser, and Thomas~\cite{HoltmannKaiserThomas12} proved that for such winning conditions, Player~$O$ has a winning strategy with bounded lookahead if and only if she has one with arbitrary lookahead, i.e., bounded lookahead always suffices for $\omega$-regular winning conditions. Furthermore, they gave a doubly-exponential upper bound on the necessary lookahead and a solution algorithm with doubly-exponential running time. These results were recently improved~\cite{KleinZimmermann14} by showing a tight exponential bound on the necessary lookahead and $\exptime$-completeness of solving delay games with $\omega$-regular winning conditions. Finally, delay games with deterministic $\omega$-context-free winning conditions are undecidable~\cite{FridmanLoedingZimmermann11}, while games with max-regular winning conditions w.r.t.\ bounded lookahead are decidable~\cite{Zimmermann14}. All these results can be expressed in terms of uniformization as well.

For all types of winning conditions mentioned above, delay games w.r.t.\ a fixed delay function are determined~\cite{FridmanLoedingZimmermann11,Zimmermann14}: these results are all ad-hoc as they rely on the existence of a deterministic automaton recognizing the winning condition and on determinacy of parity games on countable arenas: one can model the delay game as such a parity game where each vertex contains the whole history of the play as well as the state the automaton reaches when processing this history. 

%In terms of uniformization, Hosch and Landweber proved the decidability of the uniformization problem of $\omega$-regular relations by Lipschitz continuous functions, and Holtmann et al.\ proved the equivalence of the existence of a Lipschitz continuous uniformization function and the existence of a continuous uniformization function. Finally, uniformization for $\omega$-context-free relations is undecidable, but the Hosch-Landweber Theorem holds for max-regular relations. In further related work, Carayol, Löding, and Winter considered uniformization of relations over finite words~\cite{CarayolLoeding12}, finite trees~\cite{LoedingWinter14}, and infinite trees~\cite{CarayolLoeding07}.

%\begin{itemize}
%	\item Why delay games
%	\item semi-formal definition, discuss delay functions and connection to games where one player may skip whenever she wants to.
%	\item history of delay games
%	\item connection to uniformization, further results
%	\item determinacy results for delay games: all depend on deterministic (!) automata model and only hold for fixed delay function.
%\end{itemize}

\medskip

\noindent\textbf{What are Strategies in Delay Games?}
The most important aspect of a game are its (winning) strategies, e.g., in controller synthesis it is a winning strategy for the player representing the system that is turned into a controller. 

In a delay game, the notion of strategy is more complex than in a Gale-Stewart game due to the existence of the delay function: a strategy for Player~$I$ is of the form $\stratI \colon \SigmaO^* \rightarrow \SigmaI^*$ with $\size{\stratI(\beta(0) \cdots \beta(i-1))} =f(i)$, as he has to determine $f(i)$ letters in round~$i$. Thus, a strategy for Player~$I$ syntactically depends on $f$ and both players' strategies may depend semantically on $f$. On the one hand, this means that a winning strategy for a game w.r.t.\ a delay function~$f$ might not be applicable for an~$f' \neq f$. On the other hand, dependence on a particular delay function enables the reconstruction of the own previous moves. 

However, the classical definition of strategies for delay games introduced above is not useful when it comes to applications in synthesis\footnote{Nevertheless, the definition is still sufficient to study uniformization problems.}: the lack of robustness with regard to changes in the delay function is a serious problem. Furthermore, determinacy for delay games w.r.t.\ fixed delay functions is a rather unsatisfactory statement: for every $f$, either Player~$I$ has a winning strategy for $\delaygame{L}$ or Player~$O$ has one. If $\rho(f, \stratI, \stratO)$ denotes the outcome resulting from Player~$I$ employing $\stratI$ and Player~$O$ employing $\stratO$ in a game w.r.t.\ $f$, then the negation of $\exists \stratO \forall \stratI.\, \rho(f, \stratI, \stratO) \in L$ is equivalent to  $\exists \stratI \forall \stratO.\, \rho(f, \stratI, \stratO) \notin L$.  
However, the function~$f$ is quantified \emph{outside} of  the negation. 

Pushing the negation over the quantification of $f$ yields a much stronger statement, e.g., either there is an $f$ such that Player~$O$ wins $\delaygame{L}$ or Player~$I$ has a strategy that wins $\delaygame{L}$ w.r.t.\ every $f$. Note that such a strategy has to be universally applicable and winning for every $\delaygame{L}$ and may therefore neither syntactically nor semantically depend on a fix delay function. Thus, a  determinacy result w.r.t.\ such universal strategies means that the negation of $\exists f \exists \stratO \forall \stratI.\, \rho(f, \stratI, \stratO) \in L$ is equivalent to $\exists \stratI \forall f \forall \stratO.\, \rho(f, \stratI, \stratO) \notin L$, which is arguably a more natural notion.

%We propose several notions of universal strategies that differ in the amount of information about a play's history they can base their decision on and compare their strength in terms of games they are able to win. 

%
%
%\begin{itemize}
%	\item mention definition for classical games: reconstruction possible.
%	\item Discuss classical definition for delay game strategies: reconstruction possible, but depends on knowledge of $f$
%	\item problems with this definition
%	\item our proposal: universal strategies
%	\item question: which inputs: access to opponent's moves and/or information about lookahead.. yields different types of strategies: give short overview.
%\end{itemize}

\medskip

\noindent\textbf{Our Contribution.}
We study determinacy results for delay games with respect and without respect to fixed delay functions and with Borel winning conditions. 

Firstly, for games with fixed delay functions, we show determinacy w.r.t.\ classical strategies that may depend on the function under consideration. This result generalizes all previous determinacy results obtained via reductions to countable parity games using deterministic automata recognizing the winning condition. 

Secondly, we introduce universal strategies for delay games: for Player~$I$, we consider four variants that differ in the amount of information about a play's history they can access: the previous moves made by the strategy (which are not necessarily reconstructible) and the evolution of the lookahead in the previous rounds. We compare the strength of these strategies in terms of games they are able to win and show that they form a hierarchy whose first three levels are strict and that strategies in the fourth level are sufficient to win every game that is winable. It is open whether the inclusion between the last two levels is strict or not. For Player~$O$, we only consider two notions of universal strategies, as the second one is already sufficient to win every winable game. Furthermore, we show that the hierarchy for Player~$O$ is strict, too. 

Thirdly, we consider decision problems of the form \myquot{Does a player have a universal strategy for games with some given winning condition~$L$?}\ for $\omega$-regular and $\omega$-context-free winning conditions. We prove decidability (and tight complexity results) for both players in the $\omega$-regular case and for Player~$O$ i the deterministic $\omega$-context-free case. The other case and both cases for non-deterministic $\omega$-context-free winning conditions are undecidable. 

This work is meant as a starting point into the investigation of more general notions of strategies in delay games that are independent of the exact evolution of the lookahead and into  determinacy results w.r.t.\ these notions of strategies. We raise many open problems that are left open here. Most importantly, the exact amount of information about a play's history that is necessary to implement a universal strategy for Player~$I$ is open. Also, most of the decision problems remain open for the weaker notions of universal strategies we introduce. Finally, we expect there to be other natural notions of universal strategies for delay games, which might not have to be winning for every delay function (a very strong requirement), but for all $f$ for which the given player wins the delay game~$\delaygame{L}$ using classical strategies. 

\section{Preliminaries}
\label{sec_defs}
The set of non-negative integers is denoted by $ \nats $ and we define $\natsplus = \nats \setminus
\set{ 0 }$. An
alphabet~$ \Sigma $ is a non-empty finite set, $ \Sigma^{*} 
$ ($ \Sigma^{n} $, $ \Sigma^{\omega} $) denotes the set of finite words (words of length~$ n $, infinite words) over $ \Sigma $. The
empty word is denoted by $ \epsilon $ and the length of a finite word~$ w $
by~$ \size{w} $. For $ w \in \Sigma^{*} \cup \Sigma^{\omega} $ and $n \in \nats$ we
write $ \widx{w}{n} $ for the $ n $-th letter of~$ w $.	

\subsection{Delay Games}
\label{subsec_delaygames}
A delay function is a map $ f \colon \nats \rightarrow \natsplus  $. Given an $ \omega $-language $ L \subseteq
  \left( \SigmaI \times \SigmaO \right)^{\omega} $ and a delay
function~$ f $, the game $ \delaygame{L} $ is played by two players\footnote{For pronomial convenience~\cite{McNaughton00}, we assume Player~$I$ to be male and Player~$O$ to be female.},
the input player \myquot{Player~$ I $} and the output player \myquot{Player~$ O
$} in rounds $ i \in \nats $ as follows: in round~$ i $, Player~$ I $
picks a word $ u_{i} \in \SigmaI^{f(i)} $, then Player~$ O $ picks one
letter $ v_{i} \in \SigmaO $. We refer to the sequence $
(u_{0},v_{0}) (u_{1},v_{1}) (u_{2},v_{2}) \cdots$ as a play of $
\delaygame{L} $, which yields two infinite words $ \alpha =
u_{0}u_{1}u_{2} \cdots$ and $ \beta = v_{0}v_{1}v_{2} \cdots
$. Player~$ O $ wins the play if the outcome~$ {
  \widx{\alpha}{0} \choose \widx{\beta}{0} }{ \widx{\alpha}{1} \choose
  \widx{\beta}{1} }{ \widx{\alpha}{2} \choose \widx{\beta}{2} } \cdots
$ is in $ L $, otherwise Player~$ I $ wins.

Given a delay function $ f $, a strategy for Player~$ I $ is a mapping
$ \stratI \colon \SigmaO^{*} \rightarrow \SigmaI^{*} $ where $
\size{\stratI(w)} = f(\size{w}) $, and a strategy for Player~$ O $ is
a mapping $ \stratO \colon \SigmaI^{*} \rightarrow \SigmaO $. Consider a play~$ (u_{0},v_{0}) (u_{1},v_{1}) (u_{2},v_{2}) \cdots $ of $\delaygame{L} $. It is consistent with $ \stratI $, if $
u_{i} = \stratI(v_{0} \cdots v_{i-1}) $ for every $ i \in \nats $. It
is consistent with $ \stratO $, if $ v_{i} = \stratO(u_{0} \cdots
u_{i}) $ for every $ i \in \nats $. 

\begin{remark}
As usual, a strategy has only access to the opponents's moves, but not its own ones. However, this is not a restriction, since they can be reconstructed.

Fix a strategy~$\stratI$ for Player~$I$: in round~$i$, the input to $\stratI$ is the concatenation~$v_0 \cdots v_{i-1}$ of Player~$O$'s previous moves. The moves~$u_0, \ldots, u_{i-1}$ by Player~$I$ in the previous rounds are given by $u_{j} = \stratI(v_0 \cdots v_{j-1})$ for every $j < i$. 

Now, fix a strategy~$\stratO$ for Player~$O$: in round~$i$, the input to $\stratO$ is the concatenation~$u_0 \cdots  u_i$ of Player~$I$'s moves in the previous rounds, where each $u_j$ for $j \le i$ satisfies \mbox{$\size{u_j} = f(j)$}. Thus, the moves~$v_0, \ldots, v_{i-1}$ by Player~$O$ in the previous rounds are given by $v_{j} = \stratO(u_0 \cdots u_{j})$. 
Note this construction depends on knowledge about the delay function~$f$, as we decompose the input to $\stratO$ to obtain the prefix of length~$\sum_{j'=0}^{j} f(j')$.
\end{remark}

A strategy $ \stratI $ for Player~$p\in \set{I,O}$ is winning, if every play that is consistent with $\stratI$ is winning for Player~$\p$. We say that a player wins $ \delaygame{L} $, if she has a winning strategy and a delay game is determined, if one of the players wins it.
 
\begin{example} \label{example_introdelaygame}
Consider $ L_0 $ over $ \set{ a, b, c } \times \set{ b, c }
      $ with $ { \widx{\alpha}{0} \choose \widx{\beta}{0} }{
      \widx{\alpha}{1} \choose \widx{\beta}{1} }{ \widx{\alpha}{2}
      \choose \widx{\beta}{2} } \cdots \in L_0$ if $
    \widx{\alpha}{n} = a $ for every $ n \in \nats $ or if $
    \widx{\beta}{0} = \widx{\alpha}{n} $, where $ n $ is the smallest
    position with $ \widx{\alpha}{n} \neq a $. Intuitively, Player~$ O
    $ wins, if the letter she picks in the first round is equal to the
    first letter other than $ a $ that Player~$ I $
    picks. Also, Player~$ O $ wins, if there is no such
    letter. 

    We claim that Player~$ I $ wins $ \delaygame{L_0} $ for every delay function~$
    f $: he picks~$ a^{f(0)} $ in the first round and assume
    Player~$ O $ picks $ b $ afterwards (the case where she picks $ c
    $ is dual). Then, Player~$ I $ picks a word starting with $ c $ in
    the second round. The resulting play is winning for Player~$ I $
    no matter how it is continued. Thus, Player~$ I $ has a winning
    strategy in $ \delaygame{L_0} $.
\end{example}

Finally, we also consider delay-free games. Formally, these can be seen as delay games w.r.t.\ the delay function~$f$ with $f(i) = 1$ for every $i$, i.e., both players pick a single letter in each round. As $f$ is irrelevant, we denote such a game with winning condition~$L$ by $\Gamma(L)$.

\subsection{The Borel Hierarchy}
\label{subsec_borel}
\newcommand{\wadge}{W}
\newcommand{\bSigma}{\ensuremath{\mathbf{\Sigma}}}
\newcommand{\bPi}{\ensuremath{\mathbf{\Pi}}}

\newcommand{\ord}{\alpha}

Fix an alphabet $\Sigma$. The Borel hierarchy of $\omega$-languages over $\Sigma$ consists of levels $\bSigma_\ord$ and $\bPi_\ord$ for every countable ordinal~$\alpha > 0$, which are defined inductively by
\begin{itemize}
	\item $\bSigma_1 = \set{ L \subseteq \Sigma^\omega \mid L = K \cdot \Sigma^\omega \text{ for some } K \subseteq \Sigma^* }$,
	\item $\bPi_\alpha = \set{\Sigma^\omega \setminus L \mid L \in \bSigma_\alpha}$ for every $\alpha$, and
	\item $\bSigma_{\alpha} = \set{ \bigcup_{i \in \nats} L_i \mid L_i \in \bPi_{\alpha_i}  \text{ with } \alpha_i < \alpha \text{ for every } i}$ for every $\alpha >1$.
\end{itemize}

The following basic properties will be useful later on.

\begin{remark}
Let $\alpha$ be a countable ordinal.
\begin{itemize}

\item $\bSigma_\alpha \cup \bPi_\alpha \subseteq \bSigma_{\alpha+1} \cap \bPi_{\alpha+1}$.	

\item $\bSigma_\alpha$ and $\bPi_\alpha$ are closed under finite unions and finite intersections.

\end{itemize}

\end{remark}

A language~$L$ is Borel, if it is in one of the levels constituting the Borel hierarchy.

\begin{theorem}[Borel Determinacy Theorem~\cite{Martin75}]
Let $L$ be Borel. Then, $\Gamma(L)$ is determined.
\end{theorem}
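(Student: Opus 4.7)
The plan is to proceed by transfinite induction on the Borel hierarchy, using Martin's technique of game \emph{unravelings} (or coverings). The base case is handled directly for open games, and the inductive step reduces a game at level $\bSigma_\alpha$ to games at strictly lower levels, ultimately to clopen games which are trivially determined.

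For the base case $L \in \bSigma_1$, I would write $L = K \cdot \Sigma^\omega$ and argue by the standard backward-induction style: if Player~$O$ has no winning strategy in $\Gamma(L)$, then from every position reached, Player~$O$ still has no winning strategy in the corresponding subgame. Hence Player~$I$ can always respond so that (a) this non-existence is preserved and (b) no prefix of the play lies in $K$, producing a winning play for him. The dual statement gives determinacy for $\bPi_1$.

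For the inductive step, the key tool is the notion of a \emph{covering} of games: a tree map between an auxiliary game $\tilde G$ and the original game $\Gamma(L)$, together with two strategy-lifting maps, such that any winning strategy in $\tilde G$ can be pulled back to a winning strategy in $\Gamma(L)$ for the same player. The central lemma — and the main technical obstacle of the whole argument — is that every game with a $\bSigma_\alpha$ (or $\bPi_\alpha$) winning condition admits a covering by a game whose winning condition lies strictly lower in the hierarchy (and, eventually, is clopen). Martin constructs this cover on a suitably enlarged tree in which the players are forced to interleave their usual moves with auxiliary moves declaring and refining approximations to the eventual membership of the tail of the play in each of the countably many $\bPi_{\alpha_i}$ components whose union is $L$. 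The delicate points are that the enlarged tree may be indexed by countable ordinals, that the two lifting maps must commute with play concatenation, and that coverings must compose, so that the reduction can be iterated.

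Using this lemma together with the induction hypothesis, I would conclude: given $L \in \bSigma_\alpha$, cover $\Gamma(L)$ by a game $\tilde G$ of strictly lower Borel complexity; by induction $\tilde G$ is determined, and the covering transports a winning strategy back to $\Gamma(L)$ for the same player. The most delicate step will be the limit-ordinal case, where the countable-union structure of $\bSigma_\alpha$ forces a simultaneous unraveling of all components $L_i \in \bPi_{\alpha_i}$ in a single step that genuinely lowers the complexity; once this uniform construction is in place, the induction closes and, since every Borel set sits in some $\bSigma_\alpha$, the theorem follows.
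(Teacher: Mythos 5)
This statement is not proved in the paper at all: it is Martin's Borel determinacy theorem, quoted verbatim from the cited reference and used as a black box (here, and again later for the games $\Gamma(\shift_f(L))$ and $\Gamma(\skipp(L))$). So there is no in-paper argument to compare yours against; what you have written is an outline of Martin's own proof via coverings/unravelings, which is indeed the standard route to this result.

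Judged as a proof rather than a plan, however, there is a genuine gap: all of the mathematical content is packed into the covering lemma, which you state but do not prove. Two points in particular. First, your base case is misaligned with your induction. Gale--Stewart determinacy of open/closed games is easy, but it is not what the covering induction consumes: the induction needs that every \emph{closed} game, on an arbitrary game tree, can be \emph{unraveled}, i.e.\ covered by a clopen game. Constructing this closed-set unraveling --- where Player~$I$'s auxiliary moves commit him to covers/quasi-strategies witnessing how the play stays inside the closed set, and the two strategy-lifting maps must be defined position-by-position so that they are consistent under extension of plays --- is the heart of Martin's argument; it is also where the covering tree grows through iterated power sets, which by Friedman's result is unavoidable. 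Second, at the union stage you need more than "coverings compose": to unravel a countable union of $\mathbf{\Pi}_{\alpha_i}$ sets you must form an \emph{inverse limit} of an infinite sequence of coverings (each agreeing with the previous one up to a prescribed length, a $k$-covering condition you do not mention), and you must check that the covering maps are continuous so that the preimages of the remaining components stay at the claimed Borel levels. You correctly name these as the delicate points, but naming them is where the proposal stops; until the closed-set unraveling and the inverse-limit step are carried out, the argument is a faithful roadmap of Martin's proof rather than a proof.
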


%To prove membership of languages in the Borel hierarchy, one can employ continuous reductions. Here, we use a game theoretic definition via so-called Wadge games.
%
%Such a game $ \wadge(L, L') $ consists of two languages $ L
%\subseteq U^{\omega} $ and $ L' \subseteq V^{\omega} $ and is played
%in rounds between Player~I and Player~II. In each round $ i $, Player~I picks a letter $ x_{i} \in U $
%and Player~II picks a word $ y_{i} \in V^*$. After $ \omega $
%many rounds, the pair $ ( x, y ) $ with $ x =
%x_{0}x_{1}x_{2}... $ and $ y = y_{0}y_{1}y_{2}... $ determines the
%winner. Player~II wins if and only if $ y $ is infinite
%and we have $ x \in L $ iff $ y \in L' $.
%
%Here, a strategy for Player~I is a mapping $ \sigma \colon V^{*}
%\rightarrow U $ while a strategy for Player~II is a mapping $ \stratI \colon
%U^{+} \rightarrow V^{*} $. A play $ \rho = x_{0}y_{0}x_{1}x_{1}... $ is
%consistent with $ \sigma $ iff $ x_{i} = \sigma(y_{0}...y_{i-1}) $ for
%all $ i \in \nats $ and it is consistent with $ \stratI $ iff $ y_{i} =
%\stratI(x_{0}...x_{i}) $ for every $ i \in \nats $. A strategy is
%winning, if every play that is consistent with the strategy is winning.
%
%\begin{theorem}
%If $L'$ is Borel and Player~II has a winning strategy for the Wadge game~$\wadge(L, L')$, then $L$ is Borel, too.	
%\end{theorem}

%%%%%%%%%%%%%%%%%%%%%%%%%%%%%%%%%%%%%%%%%%%%%%%%%%%%%%%%%%%%%%%%%%%
%%%%%%%%%%%%%%%%%%%%%%%%%%%%%%%%%%%%%%%%%%%%%%%%%%%%%%%%%%%%%%%%%%%

\section{Borel Determinacy of Delay Games w.r.t.\ Fixed Delay Functions}
\label{sec_shiftresults}
Fix alphabets~$\SigmaI$ and $\SigmaO$ and a fresh skip symbol~$\skippsym \notin \SigmaO$, and define $\SigmaOskipp = \SigmaO\cup \set{\skippsym}$. To simplify our notation, let $h$ be the morphism that removes the skip symbol, i.e., the one defined by $h(\skippsym) = \epsilon$ and $h(a) =a $ for every $a \in \SigmaO$. Also, given two infinite words $\alpha$ and $\beta$ we write ${ \alpha \choose \beta}$ for the word ${\alpha(0) \choose \beta(0) } {\alpha(1) \choose \beta(1) }  {\alpha(2) \choose \beta(2) } \cdots $. Analogously, we write ${x \choose y}$ for finite words $x$ and $y$, provided they are of equal length.

Given a delay function~$f$ and an infinite word~$\beta \in \SigmaO^\omega$ we define $\shift_f(\beta) \in (\SigmaOskipp)^\omega$ by
\[
\shift_f(\beta) = \
\skippsym^{f(0)-1} \beta(0)
\skippsym^{f(1)-1} \beta(1)
\skippsym^{f(2)-1} \beta(2)
\cdots .
\] 
We lift this definition to languages~$L \subseteq (\SigmaI \times \SigmaO)^\omega$ via $\shift_f(L) = \set{ {\alpha \choose \shift_f(\beta)} \mid {\alpha \choose \beta} \in L }$. Intuitively, $\shift_f(L)$ encodes the delay function~$f$ explicitly by postponing Player~$O$'s moves using skip symbols. Thus, the delay game~$\delaygame{L}$ and the delay-free game~$\Gamma(\shift_f(L))$ are essentially equivalent: a winning strategy for Player~$\p \in \set{I,O}$ in $\delaygame{L}$ can directly be translated into a winning strategy for her in $\Gamma(\shift_f(L))$ and vice versa.

The main result of this section states that delay games with Borel winning conditions and w.r.t.\ fixed delay functions are determined.

\begin{theorem}
	Let $L$ be Borel and let $f$ be a delay function. Then, $\delaygame{L}$ is determined.
\end{theorem}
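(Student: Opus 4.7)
The plan is to reduce $\delaygame{L}$ to the delay-free game $\Gamma(\shift_f(L))$, apply Martin's Borel Determinacy Theorem, and transfer the result back. The correspondence between winning strategies in the two games is already sketched in the preamble to the theorem, so the key missing ingredient is that $\shift_f(L)$ is itself Borel.

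To establish this, I would single out the set $V_f \subseteq (\SigmaI \times \SigmaOskipp)^\omega$ of \emph{well-formed} shifted words, namely those ${\alpha \choose \gamma}$ in which $\gamma$ carries $\skippsym$ at exactly the positions not of the form $\sum_{j \leq i} f(j) - 1$ for some $i \in \nats$, and carries a letter of $\SigmaO$ elsewhere. Each such positional constraint is clopen, so $V_f$ is a countable intersection of closed sets and hence closed. On $V_f$, the inverse $\Psi_f \colon V_f \to (\SigmaI \times \SigmaO)^\omega$ of $\Phi_f({\alpha \choose \beta}) := {\alpha \choose \shift_f(\beta)}$ is well-defined and continuous, as it simply deletes the $\skippsym$'s in the second component while leaving the first component untouched. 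A direct check of the definitions then yields
\[
\shift_f(L) \;=\; V_f \cap \Psi_f^{-1}(L),
\]
so $\shift_f(L)$ is Borel as the intersection of a closed set with a continuous preimage of a Borel set.

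Martin's theorem then gives determinacy of $\Gamma(\shift_f(L))$. The preamble's correspondence transfers this determinacy to $\delaygame{L}$: a winning strategy for Player~$O$ in $\Gamma(\shift_f(L))$ is forced to follow the skip pattern dictated by $f$, since otherwise the outcome leaves $V_f \supseteq \shift_f(L)$ and is therefore losing; its non-skip moves directly define a winning strategy in $\delaygame{L}$. Conversely, any strategy for Player~$O$ in $\delaygame{L}$ extends to one in $\Gamma(\shift_f(L))$ by outputting $\skippsym$ at the required positions. For Player~$I$ the translation is analogous, grouping $f(i)$ consecutive moves in $\Gamma(\shift_f(L))$ into the $i$-th block of $\delaygame{L}$.

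The most delicate step is checking that Player~$I$'s winning strategy in $\Gamma(\shift_f(L))$ remains winning after restricting to opponents that strictly obey the skip pattern, as required in $\delaygame{L}$. This is precisely what the observation about $V_f$ furnishes: any deviation by Player~$O$ already makes her lose in $\Gamma(\shift_f(L))$, so Player~$I$'s strategy wins in particular against pattern-obeying opponents, and unpacking the encoding yields the desired strategy in $\delaygame{L}$.
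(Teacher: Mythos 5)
Your proposal is correct, and it shares the paper's overall skeleton (reduce $\delaygame{L}$ to the delay-free game $\Gamma(\shift_f(L))$, invoke Martin's theorem, transfer strategies back), but you prove the key lemma --- Borelness of $\shift_f(L)$ --- by a genuinely different argument. The paper runs a transfinite induction over the Borel hierarchy, explicitly rewriting $\shift_f(L)$ at each level using the pattern language $U_f$, and thereby gets a quantitative bound: shifting raises the level by at most two. You instead note that the set $V_f$ of pattern-obeying words (which is exactly the paper's $U_f$) is closed --- sharper than the $\bPi_2$ bound the paper contents itself with --- and that the skip-deleting map $\Psi_f$ is continuous on $V_f$, so $\shift_f(L)=\Psi_f^{-1}(L)$ is Borel by closure of Borel sets under continuous preimages. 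One point to make airtight: $\Psi_f$ is defined only on the subspace $V_f$, so you should either remark that a relatively Borel subset of a closed set is Borel in the ambient space, or extend $\Psi_f$ continuously to all of $(\SigmaI\times\SigmaOskipp)^\omega$ (say, by reading a fixed default letter wherever $\skippsym$ occurs at a letter position) and keep the intersection with $V_f$; either fix is one line. Your treatment of the game equivalence is more detailed than the paper's, which asserts it before the theorem without proof; in particular, your observations that a winning strategy of Player~$O$ in $\Gamma(\shift_f(L))$ must follow the skip pattern on consistent plays, and that Player~$I$'s delay-free winning strategy wins in particular against pattern-obeying opponents (with outcomes in $V_f\setminus\shift_f(L)$ de-shifting to outcomes outside $L$), are exactly the right points of care. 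As for what each approach buys: yours is shorter and uses standard descriptive-set-theoretic closure properties, while the paper's induction yields explicit hierarchy-level information and doubles as the template that is reused almost verbatim for $\skipp(L)$ in the proof of Theorem~\ref{thm_universal_I}, where the skip positions are no longer fixed by $f$.
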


\begin{proof} 
We show that $\shift_f(L)$ is Borel. Then, our claim follows from the Borel determinacy theorem, as $\Gamma(\shift_f(L))$ and $\delaygame{L}$ are essentially the same game. 

We will prove the following statement, which is not the tightest result provable, but which suffices for our purposes: if $L \subseteq (\SigmaI \times \SigmaO)^\omega$ is in $\bSigma_\alpha$ (in $\bPi_\alpha$), then $\shift_f(L)$ is in $\bSigma_{\alpha+2}$ (in $\bPi_{\alpha+2}$). To this end, the language $U_f = \shift_f((\SigmaI \times \SigmaO)^\omega)$ will be useful. Note that $U_f$ contains exactly those plays during which the non-$\skipp$ symbols are played at the positions consistent with $f$. It is straightforward to show that $U_f$ is in $\bPi_2$.

First, assume we have $L \in \bSigma_1$, i.e., $L = K \cdot (\SigmaI \times \SigmaO)^\omega$ for some $K \subseteq (\SigmaI \times \SigmaO)^*$. Then, we have $\shift_f(L) = K' \cdot (\SigmaI \times \SigmaOskipp)^\omega \cap U_f$ where 
\begin{align*}
K' = \bigcup_{{\alpha(0) \choose \beta(0) } \cdots {\alpha(k) \choose \beta(k) } \in K} \bigg\{
{ x \choose y } \mid &
y = \skippsym^{f(0)-1} \beta(0) \cdots \skippsym^{f(k)-1} \beta(k) \text{ and }\\
& x \in \alpha(0) \cdots \alpha(k) \cdot \SigmaI^{|y| -(k+1)}
\bigg\},
\end{align*}
i.e., $\shift_f(L)$ is in $\bPi_2 \subseteq \bSigma_3$.

Now, let $L$ be in $\bPi_\alpha$, i.e., $L = (\SigmaI \times \SigmaO)^\omega \setminus L'$ for some $L' \in \bSigma_\alpha$. Applying the induction hypothesis yields that $\shift_f(L')$ is in $\bSigma_{\alpha+2}$. We have 
\[\shift_f(L) = ((\SigmaI \times \SigmaOskipp)^\omega \setminus \shift_f(L')) \cap U_f,\] i.e., $\shift_f(L) \in \bPi_{\alpha+2}$. 

Finally, assume we have $L \in \bSigma_{\alpha}$ for some $\alpha>1$, i.e., $L = \bigcup_{i \in \nats} L_i$ with $L_i \in \bPi_{\alpha_i}$ for some $\alpha_i < \alpha$. An application of the induction hypothesis shows that every $\shift_f(L_i)$ is in $\bPi_{\alpha_i+2}$.
Thus, $\shift_f(L) = \bigcup_{i \in \nats} \shift_f(L_i) $
is in $\bSigma_{\alpha+2}$ as $\alpha_i+2 < \alpha+2$ for every $i$. 
\end{proof}

Furthermore, from the equivalence of $\Gamma(\shift_f(L))$ and $\delaygame{L}$, which holds for arbitrary $L$, we obtain a result that is applicable to non-Borel winning conditions as well.

\begin{corollary}
If $\Gamma(\shift_f(L))$ is determined, then so is $\delaygame{L}$.	
\end{corollary}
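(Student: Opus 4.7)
The plan is to make precise the correspondence between $\delaygame{L}$ and $\Gamma(\shift_f(L))$ that was described informally before the theorem, and then simply transport determinacy across it. Concretely, I want to exhibit, for each player~$\p \in \set{I,O}$, a translation sending strategies in $\Gamma(\shift_f(L))$ to strategies in $\delaygame{L}$ that preserves the property of being winning; once this is in place, whichever player wins $\Gamma(\shift_f(L))$ also wins $\delaygame{L}$, which is exactly the claim.

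Writing $s_k = \sum_{j=0}^{k} f(j)$ for the cumulative bounds, the key observation is that in $\Gamma(\shift_f(L))$ Player~$O$ can safely be assumed to follow the $\shift_f$-schedule, since playing a $\SigmaO$-letter at a position other than some $s_k - 1$, or playing $\skippsym$ at some $s_k - 1$, drops the play outside $U_f \supseteq \shift_f(L)$ and therefore loses it. Under this assumption, a strategy~$\sigma$ for Player~$O$ in $\Gamma(\shift_f(L))$ induces a strategy~$\sigma'$ for her in $\delaygame{L}$ by defining $\sigma'(u_0 \cdots u_i)$ as $\sigma$'s response at position~$s_i - 1$, which by the schedule is a genuine $\SigmaO$-letter. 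Dually, a strategy~$\tau$ for Player~$I$ in $\Gamma(\shift_f(L))$ induces $\tau'$ in $\delaygame{L}$ by, in round~$i$, assembling the $f(i)$ letters of Player~$I$'s move from successive queries to $\tau$ against the Player~$O$-history $\shift_f(v_0 \cdots v_{i-1})$, extended by the already-chosen letters of the current round interleaved with the forced $\skippsym$'s.

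The verification is then immediate: a play of $\Gamma(\shift_f(L))$ consistent with $\sigma$ (respectively $\tau$) against any schedule-respecting opponent corresponds bijectively to a play of $\delaygame{L}$ consistent with $\sigma'$ (respectively $\tau'$), and by the very definition of $\shift_f(L)$ the outcome of the former lies in $\shift_f(L)$ iff the outcome of the latter lies in $L$. I do not anticipate any real obstacle, as the argument is pure bookkeeping; the only mild subtlety, namely Player~$O$'s possible $\skippsym$-deviations in $\Gamma(\shift_f(L))$, is handled uniformly by the observation that such deviations never occur in a play winning for her.
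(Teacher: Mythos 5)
Your proposal is correct and takes essentially the same route as the paper: the corollary is there obtained directly from the asserted \myquot{essential equivalence} of $\delaygame{L}$ and $\Gamma(\shift_f(L))$, i.e., the mutual translation of winning strategies, which you simply spell out explicitly. Your handling of the only subtlety --- that a winning strategy of Player~$O$ in $\Gamma(\shift_f(L))$ necessarily respects the $\shift_f$-schedule because deviating leaves $U_f \supseteq \shift_f(L)$ --- is exactly the right bookkeeping and matches the paper's intent.
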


%%%%%%%%%%%%%%%%%%%%%%%%%%%%%%%%%%%%%%%%%%%%%%%%%%%%%%%%%%%%%%%%%%%
%%%%%%%%%%%%%%%%%%%%%%%%%%%%%%%%%%%%%%%%%%%%%%%%%%%%%%%%%%%%%%%%%%%

\section{Omnipotent Strategies in Delay Games}
\label{sec_strategies}
In this section, we discuss different notions of strategies for delay games. The one introduced in Subsection~\ref{subsec_delaygames} is the classical one that was used in previous works~\cite{FridmanLoedingZimmermann11,HoltmannKaiserThomas12,KleinZimmermann14,Zimmermann14}. However, such strategies depend on a fixed delay function~$f$, i.e., they are not useful for a game w.r.t.\ a delay function~$f' \neq f$. This is a syntactic dependence in the case of Player~$I$, as he has to determine $f(i)$ letters in round~$i$. But even Player~$O$'s strategies may depend implicitly on knowledge about the delay function under consideration, as we will see below. 

In this section, we consider several stronger notions of universal strategies, i.e., strategies that are independent of the delay function under consideration. Informally, for Player~$I$ such a strategy returns an infinite word~$w \in \SigmaI^\omega$ and the first $f(i)$ letters of $w$ are used in round~$i$ of a delay game w.r.t.\ $f$. For Player~$O$, a universal strategy still returns a single letter, but it may no longer depend on information about the delay function under consideration. We say that a universal strategy is omnipotent for a winning condition~$L$, if the strategy is winning for every delay game~$\delaygame{L}$, independently of the choice of $f$.

\begin{example}
	\label{example_univstrat_naive}
Again, consider the winning condition~$L_0$ from Example~\ref{example_introdelaygame} and the strategy~$\stratI \colon \SigmaO^* \rightarrow \SigmaI^\omega$ given by $\stratI(\epsilon) = a^\omega$, $\stratI(bx) = c^\omega$, and $\stratI(cx) = b^\omega$ for $x \in \SigmaO^*$. Intuitively, in round~$0$, Player~$I$ can pick as many $a$'s as $f$ requires and then always picks $c$ (or $b$), if Player~$O$ has picked $b$ (or $c$) in round~$0$. This strategy is winning for him w.r.t.\ every $f$ and therefore omnipotent for $L_0$.
\end{example}

\subsection{Omnipotent Strategies for Player~I}
\label{subsec_straI}
We consider the following variants of universal strategies for Player~$I$, which differ in the amount of information about the play's history they base their decision on. In the example above, the strategy~$\stratI$ only has access to Player~$O$'s moves, which is sufficient to be winning. More powerful notions have (directly or indirectly) access to Player~$I$'s moves or to information about the delay function under consideration. Note that Player~$I$ cannot reconstruct his moves, if he only has access to Player~$O$'s moves, but not the delay function under consideration, which explains the need to access his own moves. 

\begin{enumerate}

	\item An \emph{output-tracking} (o.t.) strategy is a map~$\stratI \colon \SigmaO^* \rightarrow \SigmaI^\omega$.  Consider a play $ (u_{0},v_{0}) (u_{1},v_{1}) (u_{2},v_{2}) \cdots $ of $\delaygame{L} $ for some $f$: it is consistent with $\stratI$, if $u_i$ is the prefix of length $f(i)$ of $\stratI(v_0 \cdots v_{i-1})$. An o.t.\ strategy bases its decisions only on the moves~$v_j$  of Player~$O$ for $j \le i$ and can deduce the number of rounds already played, but has no way to reconstruct Player~$I$'s previous moves. In fact, it cannot even reconstruct the number of letters picked by Player~$I$ thus far.
			
	\item A \emph{lookahead-counting} (l.c.) strategy is a mapping~$\stratI \colon \SigmaO^* \times \nats \rightarrow \SigmaI^\omega$. This time, we say that a play~$ (u_{0},v_{0}) (u_{1},v_{1}) (u_{2},v_{2}) \cdots $ of $\delaygame{L} $ for some $f$ is consistent with $\stratI$, if $u_i$ is the prefix of length $f(i)$ of $\stratI(v_0 \cdots v_{i-1}, \sum_{j=0}^{i-1}f(j))$. A l.c.~strategy has access to the opponent's moves and the number of letters picked by Player~$I$ thus far. However, this still does not suffice for Player~$I$ to reconstruct the actual letters already picked.
	
	\item An \emph{input-output-tracking} (i.o.t.) strategy is a mapping~$\stratI \colon \SigmaO^* \times \SigmaI^* \rightarrow \SigmaI^\omega$. We define a play~$ (u_{0},v_{0}) (u_{1},v_{1}) (u_{2},v_{2}) \cdots $ of $\delaygame{L} $ for some $f$ to be consistent with $\stratI$, if $u_i$ is the prefix of $\stratI( u_0 \cdots u_{i-1}, v_0 \cdots v_{i-1})$  of length $f(i)$. An i.o.t.~strategy has access to both players' moves thus far, but cannot reconstruct when the moves of Player~$O$ were made.

	\item A \emph{history-tracking} (h.t.) strategy is a mapping~$\stratI \colon \SigmaO^* \times (\natsplus)^* \rightarrow \SigmaI^\omega$. Again, consider a play~$ (u_{0},v_{0}) (u_{1},v_{1}) (u_{2},v_{2}) \cdots $ of $\delaygame{L} $ for some $f$: it is consistent with $\stratI$, if $u_i$ is the prefix of length $f(i)$ of $\stratI(v_0 \cdots v_{i-1}, f(0) \cdots f(i-1))$. A h.t.~strategy has access to the opponent's moves and to the values of the delay function for all previous rounds, which allows him to reconstruct his moves. Thus, giving him additionally access to his previous moves does not increase the strength of such a strategy.

\end{enumerate}

As usual, we say that a strategy (of any type) is winning for Player~$I$ in $\delaygame{L}$ if the outcome of every play that is consistent with the strategy is in the complement of $L$. A strategy is omnipotent for $L$, if it is winning for Player~$I$ in $\delaygame{L}$ for every $f$.

The definitions above are given in order of increasing expressiveness, e.g., every (omnipotent) o.t.~strategy can be turned into an (omnipotent) l.c.~strategy inducing the same plays. The first two constructions are straightforward, and for the last one, Player~$I$ has to reconstruct his moves~$u_0, \ldots, u_{i-1}$ using the information about the values~$f(0), \ldots, f(i-1)$ and knowledge of his own i.o.t.~strategy.

Our first result shows that the first three types of strategies form a strict hierarchy in terms of the games that can be won with them. The last case will be discussed in Section~\ref{sec_char}: it is open whether omnipotent l.c.~strategies are strictly stronger than i.o.t.~strategies.

\begin{theorem}
\label{thm_strict_I}
There are winning conditions~$L_1$  and $L_2$ such that
\begin{enumerate}

	\item\label{thm_strict_I_1}
 Player~$I$ has an omnipotent l.c.~strategy for $L_1$, but no omnipotent o.t.~strategy, and

	\item\label{thm_strict_I_2}
 Player~$I$ has an omnipotent i.o.t.~strategy for $L_2$, but no omnipotent l.c.~strategy.

\end{enumerate}	
\end{theorem}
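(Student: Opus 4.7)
The plan is to construct the two languages $L_1$ and $L_2$ explicitly.

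For part~\ref{thm_strict_I_1}, take $\SigmaO = \set{0}$, $\SigmaI = \set{a,b}$, and let $\tau \in \SigmaI^\omega$ be the indicator of the triangular numbers: $\widx{\tau}{n} = a$ if $n \in \set{0,1,3,6,10,\ldots}$ and $\widx{\tau}{n} = b$ otherwise. Define
\[
L_1 = \big\{ {\alpha \choose 0^\omega} \in (\SigmaI \times \SigmaO)^\omega : \alpha \neq \tau \big\},
\]
so that Player~$I$ wins $\delaygame{L_1}$ for a given $f$ iff he realizes $\alpha = \tau$. An omnipotent l.c.~strategy is immediate: set $\stratI(v, s) = \widx{\tau}{s}\widx{\tau}{s+1}\cdots$; then in round~$i$ Player~$I$ plays precisely $\widx{\tau}{s_i}\cdots\widx{\tau}{s_{i+1}-1}$, so the overall outcome is $\tau$. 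To rule out omnipotent o.t.~strategies, observe that since $\SigmaO$ is a singleton, an o.t.~strategy reduces to a sequence of infinite words $w_i = \stratI(0^i)$, and the play under~$f$ is the concatenation of the length-$f(i)$ prefixes of the~$w_i$. Evaluating the strategy against $f \equiv 1$ would force $\widx{w_1}{0} = \widx{\tau}{1} = a$, while evaluating it against the delay function $f(0)=2$, $f(j)=1$ for $j \ge 1$, would force $\widx{w_1}{0} = \widx{\tau}{2} = b$: two contradictory demands on the same value.

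For part~\ref{thm_strict_I_2}, the key asymmetry I would exploit is that an i.o.t.~strategy which starts each round with a fresh marker symbol can recover, from the concrete past $u_0\cdots u_{i-1}$, the entire decomposition $(f(0),\ldots,f(i-1))$, whereas an l.c.~strategy sees only their sum~$s_i$. A preliminary observation restricts the form of $L_2$: any condition realizable by a single fixed $\alpha$ (independent of $f$) admits the omnipotent l.c.~strategy $\stratI(v, s) = \widx{\alpha}{s}\widx{\alpha}{s+1}\cdots$; hence $\SigmaO$ must be non-trivial and the condition must couple Player~$I$'s obligations to Player~$O$'s moves, which themselves depend on~$f$ through the concrete past $u_j$'s. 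I would design $L_2$ in the spirit of Example~\ref{example_introdelaygame}: Player~$O$ issues an infinite stream of predictions and Player~$I$ must, infinitely often, place a ``challenge'' letter that both (i) mismatches Player~$O$'s corresponding prediction and (ii) satisfies a consistency invariant with his own earlier challenge letters (e.g.~alternation, or equality to a specific earlier choice). An i.o.t.~strategy satisfies both by inspecting $u_{<i}$, but an l.c.~strategy cannot in general reconstruct its past challenge letters, since two different decompositions of the same~$s_i$ yield different past plays.

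The main obstacle in part~\ref{thm_strict_I_2} is to balance (i) and (ii) so that the intended i.o.t.~strategy always has a jointly valid response against every adversarial Player~$O$, while simultaneously forcing every l.c.~strategy into an inconsistency. I expect the l.c.~lower bound to proceed by mirroring the two-decomposition argument of part~\ref{thm_strict_I_1}: given a candidate l.c.~strategy $\stratI$, viewed as the family $\set{\stratI(v,s)}_{v,s}$ of infinite words, exhibit two delay functions that share the critical cumulative length~$s_i$ but yield distinct $u_{<i}$, and let the adversarial Player~$O$ choose her challenges so as to expose the resulting inconsistency in~$\stratI$.
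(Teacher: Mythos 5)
Your construction for part~\ref{thm_strict_I_1} is correct and is essentially the paper's argument in a slightly different guise: the paper uses $L_1 = \set{{\alpha \choose \beta} \mid \alpha \neq (ab)^\omega}$ with the l.c.~strategy $\stratI(x,n) = (ab)^\omega$ for even $n$ and $(ba)^\omega$ for odd $n$, and rules out o.t.~strategies by the same kind of two-delay-function clash you use. Your singleton output alphabet makes the lower bound a little cleaner, and any target word $\tau$ with $\widx{\tau}{1} \neq \widx{\tau}{2}$ would do, so the triangular numbers are an unnecessary flourish; but the argument is sound.

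Part~\ref{thm_strict_I_2}, however, is a genuine gap: you never define $L_2$, and you yourself flag the unresolved step (\myquot{the main obstacle \ldots is to balance (i) and (ii)}). The intuition you state --- an i.o.t.~strategy can recover its past from $u_0\cdots u_{i-1}$ while an l.c.~strategy cannot distinguish different decompositions of the same cumulative length $s_i$ --- is the right one, but the theorem needs an explicit condition together with (a) a concrete i.o.t.~strategy winning against every Player~$O$ and every $f$, and (b) an argument defeating \emph{every} l.c.~strategy, not just the intended one. Point (b) does not follow by \myquot{mirroring} part~\ref{thm_strict_I_1}: since an l.c.~strategy also sees $v_0\cdots v_{i-1}$, two delay functions sharing the cumulative sum $s_i$ typically reach it after different numbers of rounds, hence feed the strategy different inputs $(x,n)$, so you cannot simply play one value of the strategy against itself. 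For comparison, the paper takes $\SigmaI=\set{a,b,c}$, $\SigmaO=\set{b,c}$ and lets Player~$I$ win iff $\alpha \in a^{n_0}\beta(0)a^{n_1}\beta(1)\cdot\SigmaI^\omega$ with $n_1 > n_0$; an i.o.t.~strategy wins by reading $n_0$ and the current length of the second block off its own past moves. The l.c.~lower bound is then a two-stage argument: first one shows, for every input $(x,n)$ with $2\size{x}\le n$, that a winning l.c.~strategy's answer must begin with $aa$, $a\,x(0)$, or $x(0)\,a$, by constructing for each such pair a particular delay function realizing it and analyzing which continuations avoid losing; only then is the strategy played against $f'\equiv 2$ with Player~$O$ picking $b$ and $c$, so that the letter $c$ is never produced. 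Some such argument --- universal constraints extracted from one family of delay functions and then violated under another --- is exactly what your sketch still owes.
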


\begin{proof}
\ref{thm_strict_I_1}.) Let $L_1 = \set{ {\alpha \choose \beta} \mid \alpha \neq (ab)^\omega}$. Intuitively, Player~$I$  wins a game with winning condition~$L_1$ if he is able to produce the word~$(ab)^\omega$, the moves of Player~$O$ are irrelevant. Indeed, one can easily build a l.c.~strategy~$\stratI$ by defining $\stratI(x, n) = (ab)^\omega$ for even $n$ and $\stratI(x, n) = (ba)^\omega$ for odd $n$. Every outcome of a play that is consistent with $\stratI$ has $(ab)^\omega$ in its first component and is therefore winning for Player~$I$. Hence, $\stratI$ is omnipotent for $L_1$.
	
However, we claim that Player~$I$ has no omnipotent o.t.~strategy~$\stratI$ for $L_1$. If $\stratI(\epsilon) \neq (ab)^\omega$, then $\stratI$ is losing for some $f$ such that $f(0)$ is larger than the first position where  $\stratI(\epsilon)$ and $(ab)^\omega$ differ. Thus, we can assume $\stratI(\epsilon) = (ab)^\omega$. Now, fix some letter~$c \in \SigmaO$ and consider the first letter of $\stratI(c)$: if it is $a$ (the other case is dual), then $\stratI$ is losing for every $f$ with odd~$f(0)$, as the first component of the resulting outcome contains two $a$'s in a row, if Player~$O$ picks $c$ in the first round.

\ref{thm_strict_I_2}.) Fix $\SigmaI = \set{a, b, c}$ and $\SigmaO = \set{b,c}$, and define 
\[L_2 = (\SigmaI \times \SigmaO)^\omega \setminus \left\{ { \alpha \choose \beta} \mid \alpha \in  a^{n_0}\, \beta(0)\, a^{n_1}\, \beta(1)\, \cdot \SigmaI^\omega \text{ with } n_1 > n_0 \right\},\]
i.e., in order to win, Player~$I$ has to copy the first two letters picked by Player~$O$ and ensure to produce more $a$'s between these two positions than before the first one. It is straightforward to show that the following i.o.t.~strategy for Player~$I$ is omnipotent for $L_2$:
\[
\stratI(x,y) = \begin{cases}
 a^\omega &\text{if $x = \epsilon$,}\\
 x(0)\,a^\omega &\text{if $\size{x} =1$,}\\
 a^{n-k+1}\,x(1)\,a^\omega &\text{if $\size{x}>1$ and $y = a^n x(0) a^k$,}\\
 a^\omega &\text{otherwise.}\\
 \end{cases}
\]
Intuitively, Player~$I$ picks $a$'s until Player~$O$ has picked her first letter, which is immediately copied by Player~$I$. Then, he picks $a$'s until he has access to the second letter picked by Player~$O$ and then continues doing so until the second $a$-block is longer than the first one. Then, he copies Player~$O$'s second letter and only picks $a$'s afterwards. Note that it is crucial for Player~$I$ to have access to his own previous moves to guarantee that the second $a$-block is longer than the first one and that he does not necessarily pick $x(1)$ in the second round.

Next, we show that Player~$I$ has no omnipotent l.c.~strategy for $L_2$. Towards a contradiction, assume there is one, call it $\stratI$. We claim that $\stratI(x,n)$ begins with $aa$, $ax(0)$, or $x(0)a$ for every input~$(x,n)$ with $2\size{x} \le n$. 
This is straightforward for $x = \epsilon$ and $n=0$ (the cases $(\epsilon, n)$ with $n>0$ are irrelevant), since $\stratI(\epsilon, 0)$ has to be equal to $a^\omega$. If not, Player~$O$ would have a counterstrategy against $\stratI$ w.r.t.\ some large enough $f$ by picking $c$ in round~$0$, if the first non-$a$ letter in $\stratI(\epsilon, 0)$ is $b$ and vice versa.

It remains to consider an input~$(x,n)$ satisfying $0 < 2\size{x} \le n $. Consider a delay function~$f$ satisfying $f(0) = n - \size{x} +1$, $f(i) = 1$ for $i$ in the range~$1 \le i \le n$, and $f(n+1) =2$. Now, use $\stratI$ in $\delaygame{L_2}$ against Player~$O$ picking the letters of $x$ in the first $\size{x}$ rounds: Player~$I$ picks $a^{f(0)}$ in the first round, and $\alpha(1), \ldots, \alpha(\size{x}-1)$ during the next $\size{x}-1$ rounds, while Player~$O$ picks $x(0), \ldots, x(\size{x}-1)$. Note that we have $\size{a^{f(0)}\alpha(1) \cdots \alpha(\size{x}-1)} = n$; the next letters picked by Player~$I$ are therefore the first two of $\stratI(x, n)$. 

We consider two cases: if $\alpha(j) = a$ for every $j$, then the next two letters picked by $\stratI$ may contain only $a$'s, or one $a$ and the first letter of $x$, but not any other combination. Especially, the second letter of $x$ may not yet be picked, since the resulting $a$-block would be of length zero, which would results in a losing play. On the other hand, if there is a $j$ such that $\alpha(j) = x(0)$ then, all other $\alpha(j')$ have to be equal to $a$, since the second $a$-block would again be too short otherwise. Thus, the first $a$-block has at least length~$n-\size{x}+1$, which implies that the second $a$-block has at most length~$\size{x}-2$ after round $n$. As we have $n-\size{x}+1 \ge \size{x}+2$, we conclude that the next two letters picked by $\stratI$ have to be both an~$a$. 

To conclude, apply $\stratI$ in $\delaygamep{L_2}$ for the delay function~$f'$ with $f'(i)=2$ for every $i$ against Player~$O$ picking $b$ and $c$ in the first two rounds. As shown above, $\stratI$ will pick $aa$, $ab$, or $ba$ in every round. Thus, the resulting outcome is losing for Player~$I$, as he never picks a $c$. 
\end{proof}

Note that the winning condition~$L_1$ is $\omega$-regular and even recognizable by a deterministic $\omega$-automaton with reachability acceptance condition, and therefore in $\bSigma_1$. Furthermore, the winning condition~$L_2$ is not $\omega$-regular, but recognizable by a deterministic $\omega$-pushdown automaton with safety acceptance, and in $\bPi_1$.

\subsection{Omnipotent Strategies for Player~O}
\label{subsec_straO}
Now, we consider universal strategies for Player~$O$. The standard definition given in Subsection~\ref{subsec_delaygames} is syntactically independent of a fixed delay function. However, the reconstruction of Player~$O$'s moves made in previous rounds depends on knowledge about $f$. This can be exploited to show that strategies for Player~$O$ that have access to the number of rounds played already are more powerful than strategies which do not. Formally, we consider two types of omnipotent strategies for Player~$O$ corresponding to the first two notions for Player~$I$. The other two notions introduced for Player~$I$ are not necessary for Player~$O$.

\begin{enumerate}

	\item An \emph{input-tracking} (i.t.) strategy is a mapping~$\stratO \colon \SigmaI^* \rightarrow \SigmaO$. Consider a play $ (u_{0},v_{0}) (u_{1},v_{1}) (u_{2},v_{2}) \cdots $ of $\delaygame{L} $ for some $f$: it  is consistent with $\stratO$, if $v_i = \stratO(u_0 \cdots u_{i})$. Such a strategy cannot reconstruct Player~$O$'s previous moves and cannot even determine how many rounds were played already.
			
	\item A \emph{round-counting} (r.c.) strategy is a mapping~$\stratO \colon \SigmaI^* \times \nats \rightarrow \SigmaO$. This, time, we say that a play~$ (u_{0},v_{0}) (u_{1},v_{1}) (u_{2},v_{2}) \cdots $ of $\delaygame{L} $ for some $f$ is consistent with the strategy~$\stratO$, if $v_i=\stratO(u_0 \cdots u_{i}, i)$. A r.c.~strategy has access to the opponent's moves and the number of rounds played thus far. 

\end{enumerate}

Note the asymmetry between the counting strategies for Player~$I$ and Play\-er~$O$: Player~$I$ counts the number of letters he has picked thus far and therefore, as he has direct access to Player~$O$'s moves, the size of the lookahead. Player~$O$ counts the number of rounds, i.e., the number of letters she has picked thus far. Again, this allows her to determine the size of the lookahead, as she has access to Player~$I$'s moves. 
Omnipotency for Player~$O$'s strategies is defined as before. Also, as for Player~$I$, every i.t.~strategy can be turned into an r.c.~strategy. Finally, r.c.~strategies are more powerful than i.t. ones. 

\begin{theorem}
\label{thm_strict_O}
There is a winning condition~$L_3$ such that Player~$O$ has an omnipotent r.c.~strategy for $L_3$, but no omnipotent i.t.~strategy.
\end{theorem}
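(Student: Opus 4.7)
The plan is to choose $L_3$ whose winning condition only distinguishes round~$0$ from all later rounds: an r.c.~strategy sees the round index and wins trivially, while an i.t.~strategy, whose input is the concatenation of Player~$I$'s moves, has no way to tell round~$0$ apart from later rounds if Player~$I$ plays the same letter throughout.

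Concretely, I would take $\SigmaI = \set{a}$ and $\SigmaO = \set{a,b}$ and define
\[
L_3 = \set{ {\alpha \choose \beta} \mid \widx{\beta}{0} = a \text{ and } \widx{\beta}{i} = b \text{ for every } i \ge 1 }.
\]
Since $\SigmaI$ is a singleton, $\alpha = a^\omega$ is forced, so a play is winning for Player~$O$ iff $\beta = a b^\omega$. An omnipotent r.c.~strategy is then immediate: set $\stratO(x,0) = a$ and $\stratO(x,i) = b$ for $i \ge 1$; independently of $f$, this produces $\beta = a b^\omega$.

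For the non-existence of an omnipotent i.t.~strategy~$\stratO$, the key observation is that, since $\SigmaI$ is a singleton, the input to $\stratO$ in round~$i$ of $\delaygame{L_3}$ is the string~$a^{n_i}$ with $n_i = \sum_{j=0}^{i} f(j)$; hence the value $\stratO(a^n)$ determines Player~$O$'s move in every situation in which Player~$I$ has picked exactly $n$ letters, no matter what $f$ is and no matter which round we are in. A case split on $\stratO(a)$ then suffices. If $\stratO(a) \neq a$, the play w.r.t.\ $f \equiv 1$ is losing for Player~$O$, because $\widx{\beta}{0} = \stratO(a) \neq a$. Otherwise $\stratO(a) = a$, and then $\stratO(aa)$ is pinned down in two incompatible ways: w.r.t.\ the delay function with $f(0) = 2$ and $f(i) = 1$ for $i \ge 1$, the value $\stratO(aa)$ is Player~$O$'s move in round~$0$ and must equal $a$, while w.r.t.\ $f \equiv 1$ it is Player~$O$'s move in round~$1$ and must equal $b$. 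Contradiction.

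The only step that needs to be spelled out carefully is the $f$-independence of the input to an i.t.~strategy, which is what makes the singleton~$\SigmaI$ essential for the argument; once this is stated, the rest is the short case analysis above, so I do not anticipate any real obstacle.
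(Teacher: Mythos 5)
Your proposal is correct and follows essentially the same idea as the paper's (sketched) proof: the winning condition forces Player~$O$ to produce a fixed output word, which a round-counting strategy does trivially, while an input-tracking strategy fails because the same input string~$a^n$ arises in round~$0$ under one delay function and in a later round under another, pinning $\stratO(a^n)$ down in incompatible ways. Your choice of a singleton~$\SigmaI$ and the target~$a b^\omega$ instead of $(ab)^\omega$ is only a cosmetic variation, and your case analysis is complete and sound.
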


The proof is a variation of the analogue for Player~$I$: the winning condition requires Player~$O$ to produce $(ab)^\omega$, which she can do, if she has access to the number of rounds already played, but she cannot do it without this information. Again, the distinguishing winning condition is very simple: it is $\omega$-regular and even recognizable by a deterministic $\omega$-automaton with safety acceptance condition, and therefore in $\bPi_1$. 

%%%%%%%%%%%%%%%%%%%%%%%%%%%%%%%%%%%%%%%%%%%%%%%%%%%%%%%%%%%%%%%%%%%
%%%%%%%%%%%%%%%%%%%%%%%%%%%%%%%%%%%%%%%%%%%%%%%%%%%%%%%%%%%%%%%%%%%

\section{Borel Determinacy of Delay Games with Omnipotent Strategies}
\label{sec_skipresults}
Now, we turn our attention to delay games without fixed delay functions and show that there is either a winning strategy for Player~$O$ for some $f$, or Player~$I$ wins for every $f$ with the \emph{same} omnipotent strategy. Then, we show the dual result for Player~$O$.

We still use the notation introduced at the beginning of Section~\ref{sec_shiftresults} and start by defining $\skipp(L) = \bigcup_f \shift_f(L)$, where there union ranges over all delay functions~$f$. Note that Player~$O$ loses a play in a game with winning condition~$\skipp(L)$ if she picks $\skippsym$ all but finitely often. Also, we have $\skipp(L) = \set{ {\alpha \choose \beta } \in  (\SigmaI \times \SigmaOskipp)^\omega \mid  {\alpha \choose h(\beta) } \in L }$.

The tight connection between delay games~$\delaygame{L}$ for arbitrary~$f$  and the delay-free game~$\Gamma(\skipp(L))$ appears implicitly in the work by Holtmann et al.~\cite{HoltmannKaiserThomas12} and is made explicit below. We exploit these connections to prove determinacy of delay games with Borel winning conditions.

\begin{theorem}
\label{thm_universal_I}
	Let $L$ be Borel. Either, Player~$O$ wins $\delaygame{L}$ for some $f$ or Play\-er~$I$ has an omnipotent h.t.~strategy for $L$. 
\end{theorem}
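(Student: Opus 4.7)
The plan is to apply Borel determinacy to the delay-free game $\Gamma(\skipp(L))$ and to translate the resulting winning strategy into the required conclusion. First I would verify that $\skipp(L)$ is Borel whenever $L$ is: using the characterization given just after the definition of $\skipp(L)$, the language is the intersection of the $\bPi_2$-set of plays in which Player~$O$ plays non-$\skippsym$ letters infinitely often with the preimage of the Borel set $L$ under the map ${\alpha \choose \beta} \mapsto {\alpha \choose h(\beta)}$, which is continuous on that $\bPi_2$-domain. Hence $\skipp(L)$ is Borel, and the Borel determinacy theorem yields a winning strategy for one of the two players in $\Gamma(\skipp(L))$.

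Suppose first that Player~$I$ wins $\Gamma(\skipp(L))$ via $\rho \colon \SigmaOskipp^* \rightarrow \SigmaI$. I would define the h.t.\ strategy
\[
\stratI(v_0 \cdots v_{i-1}, f(0) \cdots f(i-1))(\ell) = \rho\bigl(\skippsym^{f(0)-1} v_0 \cdots \skippsym^{f(i-1)-1} v_{i-1}\,\skippsym^\ell\bigr).
\]
A straightforward induction on the round number shows that any play of $\delaygame{L}$ consistent with $\stratI$ with outcome ${\alpha \choose \gamma}$ corresponds exactly to the play of $\Gamma(\skipp(L))$ with outcome ${\alpha \choose \shift_f(\gamma)}$ against $\rho$, in which Player~$O$ employs the $\skippsym$-padding dictated by $f$. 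Since $\rho$ is winning and Player~$O$ commits infinitely often in this padded play, ${\alpha \choose \shift_f(\gamma)} \notin \skipp(L)$ forces ${\alpha \choose \gamma} \notin L$. As this holds for every $f$, the strategy $\stratI$ is omnipotent.

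Suppose instead that Player~$O$ wins $\Gamma(\skipp(L))$ via $\sigma$. I would extract a witnessing delay function by a K\"onig's lemma argument. For each $i \in \nats$, let $T_i$ denote the finitely branching tree of histories $u \in \SigmaI^*$ along which $\sigma$ has committed on at most $i$ non-empty prefixes of $u$. An infinite branch of $T_i$ would correspond to a play in $\Gamma(\skipp(L))$ in which $\sigma$ commits at most $i$ times total, which is losing for Player~$O$ by definition of $\skipp(L)$ and contradicts that $\sigma$ is winning. Hence K\"onig's lemma yields that $T_i$ is finite; let $p_i$ denote its depth. Since the $(i+1)$-th commit in any play occurs strictly after the $i$-th, the depths satisfy $p_i \ge p_{i-1} + 1$, so setting $p_{-1} := -1$ and $f(i) := p_i - p_{i-1}$ defines a valid delay function with $\sum_{j \le i} f(j) = p_i + 1$. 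By construction, after round~$i$ of $\delaygame{L}$ Player~$I$ has played enough letters to force $\sigma$ to have committed at least $i+1$ times, irrespective of the play. Defining $\sigma'(u_0 \cdots u_i)$ to be Player~$O$'s $(i+1)$-th commit under $\sigma$ against $u_0 \cdots u_i$ then yields a strategy for $\delaygame{L}$ whose outcomes coincide with $h$ applied to the corresponding $\Gamma(\skipp(L))$-outcomes against $\sigma$, and hence lie in $L$.

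The main obstacle is the K\"onig argument: it crucially relies on the observation that plays in $\Gamma(\skipp(L))$ in which Player~$O$ commits only finitely often are losing for her (baked into the definition of $\skipp(L)$), which rules out infinite branches in $T_i$ under a winning $\sigma$.
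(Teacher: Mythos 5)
Your proposal is correct and follows essentially the same route as the paper: reduce to the delay-free game $\Gamma(\skipp(L))$, invoke Borel determinacy, extract a delay function~$f$ and a winning strategy for $\delaygame{L}$ from a winning strategy of Player~$O$ via exactly the K\"onig's-lemma/maximal-depth argument (your $p_i$ are the paper's $\ell_i$), and turn a winning strategy of Player~$I$ into an omnipotent h.t.~strategy by feeding it the $\skippsym$-padded history and ever longer $\skippsym$-suffixes. The only deviation is how Borelness of $\skipp(L)$ is established -- you use continuity of the $h$-induced map on the $\bPi_2$-set of plays with infinitely many non-$\skippsym$ letters, whereas the paper redoes its induction through the Borel hierarchy -- and both arguments are fine, since only Borelness (not level bounds) is needed here.
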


\begin{proof}
	First, we show that $\skipp(L)$ is Borel and then apply the connection between the games~$\delaygame{L}$ for arbitrary $f$ and $\Gamma(\skipp(L))$.
	
Proving $\skipp(L)$ to be Borel is analogous to the proof for $\shift_f(L)$, we just replace the intersections with $U_f$ by intersections with $U = \skipp((\SigmaI \times \SigmaO)^\omega)$ (which is also in $\bPi_2$) and the definition of $K'$ in the induction start is changed to 
\begin{align*}
K' = \bigcup_{{\alpha(0) \choose \beta(0) } \cdots {\alpha(k) \choose \beta(k) } \in K} \bigg\{
{ x \choose y } \mid &
y \in \skippsym^* \beta(0) \cdots \skippsym^* \beta(k) \text{ and }\\
& x \in \alpha(0) \cdots \alpha(k) \cdot \SigmaI^{|y| -(k+1)}
\bigg\},
\end{align*}

Thus, $\Gamma(\skipp(L))$ is determined. 

Next, we show that Player~$O$ wins $\delaygame{L}$ for some $f$, if she wins $\Gamma(\skipp(L))$. Let $\stratO'$ be a winning strategy for Player~$O$ in $\Gamma(\skipp(L))$. We construct a delay function~$f$ and a winning strategy~$\stratO$ for Player~$O$ in $\delaygame{L}$ by simulating a play in $\delaygame{L}$ by a play in $\Gamma(\skipp(L))$.

We begin by defining $f$. For $i \in \nats$ let $\ell_i$ be the maximal number such that Player~$O$ picks at most $i$ non-skip symbols during the first $\ell_i$ rounds in every play of $\Gamma(\skipp(L))$ that is consistent with $\stratO'$. We claim that every $\ell_i$ is well-defined. Assume $\ell_i$ for some fixed~$i$ is not. Then, the play prefixes under consideration for defining $\ell_i$ form an infinite, but finitely branching tree. Hence, König's Lemma implies the existence of an infinite play that is consistent with $\stratO'$ during which Player~$O$ all but finitely often picks $\skippsym$. This play is losing for her, thus contradicting $\stratO'$ being a winning strategy. 

By construction, if Player~$I$ has picked $\ell_i +1$ letters in $\Gamma(\skipp(L))$, then $\stratO'$ has determined at least $i+1$ non-skip letters. Now, let $f(0) = \ell_0 +1$ and $f(i+1) = ( \ell_{ i+1 } +1 ) - \sum_{j = 0}^{i} f(j)$.  It remains to define $\stratO$: assume Player~$I$ has picked $u_0, \ldots , u_i$ in rounds~$i = 0,1, \ldots, i$ with $\size{u_j} = f(j)$. Consider the play prefix in $\Gamma(\skipp(L))$ during which Player~$I$ picks $u_0 \cdots u_i$ and Player~$O$ plays according to $\stratO'$. We define $\stratO(u_0 \cdots u_i)$ to be the $i$-th non-skip letter (starting with the $0$-th letter) picked by Player~$O$ on this play prefix. This is well-defined by the definition of $f$.

Let ${ \alpha \choose \beta }$ an outcome that is consistent with $\stratO$. A straightforward induction shows that there is a play in $\Gamma(\skipp(L))$ that is consistent with $\stratO'$ and has an outcome~${ \alpha \choose \beta' }$ such that $\beta = h(\beta')$. Hence, $\stratO'$ being a winning strategy implies ${ \alpha \choose \beta' } \in \skipp(L)$ and therefore ${ \alpha \choose \beta } \in L$. Thus, $\stratO$ is a winning strategy for Player~$O$ in $\delaygame{L}$.

To conclude, we show that Player~$I$ has an omnipotent h.t.~strategy~$\stratI$ for $L$, if he wins $\Gamma(\skipp(L))$. To this end, let $\stratI' \colon (\SigmaOskipp)^* \rightarrow \SigmaI$ be a winning strategy for Player~$I$ in $\Gamma(\skipp(L))$.
We define an h.t.~strategy~$\stratI \colon \SigmaO^* \times (\natsplus)^* \rightarrow \SigmaO^\omega$. Let $x \in \SigmaO^*$ and $n_0 \cdots n_{i-1} \in (\natsplus)^*$. Note that $\stratI$ will only be applied to inputs~$(x, n_0 \cdots, n_{i-1})$ where $\size{x} = i$. Thus, we restrict our attention to those inputs. Let
\[x' = \skippsym^{n_0-1} \, x(0) \, \skippsym^{n_1-1} \, x(1) \, \cdots \skippsym^{n_{i-1}-1} \, x(i-1) \in (\SigmaOskipp)^* \]
and define
\[
\stratI(x, n_0 \cdots n_{i-1}) = 
\stratI'(x')\, 
\stratI'(x'\skippsym)\, 
\stratI'(x'\skippsym\skippsym)\,
\stratI'(x'\skippsym\skippsym\skippsym)\, \cdots
,
\]
i.e., the answers according to $\stratI'$ to Player~$O$ picking $\skippsym$ ad infinitum after picking $x'$.

A straightforward induction shows that for every outcome~${\alpha \choose \beta}$ that is consistent with $\stratI$ in $\delaygame{L}$ for some $f$, there is an outcome~${\alpha \choose \beta'}$ that is consistent with $\stratI'$ such that $h(\beta') = \beta$. As $\stratI'$ is winning for Player~$I$ in $\Gamma(\skipp(L))$ we have ${\alpha \choose \beta'} \notin \skipp(L)$ and thus ${\alpha \choose \beta} \notin L$. Hence, $\stratI$ is winning for $\delaygame{L}$ for every $f$ and therefore omnipotent for $L$.
 \end{proof}

The second part of the proof above (the equivalence of the delay games and the delay-free game) works for arbitrary winning conditions. Hence, we obtain the following corollary.

\begin{corollary}
If $\Gamma(\skipp(L))$ is determined, then either Player~$O$ wins $\delaygame{L}$ for some $f$ or Player~$I$ has an omnipotent h.t.~strategy for $L$. 
\end{corollary}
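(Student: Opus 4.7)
The plan is to reuse the second half of the preceding theorem's proof verbatim, since nothing there exploited that $L$ was Borel apart from invoking Martin's theorem to conclude determinacy of $\Gamma(\skipp(L))$. Here, that conclusion is taken as a hypothesis, so the entire Borel-class calculation at the start of the proof becomes unnecessary. First I would apply the hypothesis to obtain a winning strategy in $\Gamma(\skipp(L))$ for one of the two players, and then split on which player it is.

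If Player~$O$ wins $\Gamma(\skipp(L))$ via some $\stratO'$, I would repeat the simulation argument from the theorem: for each $i$, define $\ell_i$ as the largest number such that every $\stratO'$-consistent prefix of length $\ell_i$ contains at most $i$ non-skip letters from Player~$O$. The tree of $\stratO'$-consistent prefixes is finitely branching, so if $\ell_i$ were undefined, König's Lemma would produce an infinite $\stratO'$-consistent play in which Player~$O$ plays $\skippsym$ cofinitely; such a play lies outside $\skipp(L)$ by definition, contradicting that $\stratO'$ is winning. This step does not need $L$ Borel. From the $\ell_i$, I would set $f(0) = \ell_0 + 1$ and $f(i+1) = (\ell_{i+1}+1) - \sum_{j \le i} f(j)$, and define $\stratO(u_0 \cdots u_i)$ to be the $i$-th non-skip letter produced by $\stratO'$ after Player~$I$ has played $u_0 \cdots u_i$ in $\Gamma(\skipp(L))$. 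A routine induction then converts any $\stratO$-consistent outcome $\binom{\alpha}{\beta}$ in $\delaygame{L}$ into a $\stratO'$-consistent outcome $\binom{\alpha}{\beta'}$ in $\Gamma(\skipp(L))$ with $h(\beta') = \beta$, yielding $\binom{\alpha}{\beta} \in L$.

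If instead Player~$I$ wins $\Gamma(\skipp(L))$ via some $\stratI' \colon (\SigmaOskipp)^* \to \SigmaI$, I would define the h.t.\ strategy $\stratI$ exactly as in the theorem: given $(x, n_0 \cdots n_{i-1}) \in \SigmaO^* \times (\natsplus)^*$ with $\size{x} = i$, form
\[
x' = \skippsym^{n_0-1}\, x(0) \, \skippsym^{n_1-1}\, x(1) \cdots \skippsym^{n_{i-1}-1}\, x(i-1) \in (\SigmaOskipp)^*
\]
and let $\stratI(x, n_0 \cdots n_{i-1})$ be the infinite word whose $k$-th letter is $\stratI'(x' \skippsym^k)$, i.e., the continuation of $\stratI'$ assuming Player~$O$ keeps picking $\skippsym$. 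The same inductive translation used in the theorem shows that every $\stratI$-consistent outcome $\binom{\alpha}{\beta}$ in $\delaygame{L}$, for any delay function $f$, lifts to a $\stratI'$-consistent outcome $\binom{\alpha}{\beta'}$ in $\Gamma(\skipp(L))$ with $h(\beta') = \beta$, and hence $\binom{\alpha}{\beta} \notin L$; thus $\stratI$ is omnipotent.

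There is no real obstacle here beyond checking that neither half of the translation relied on $\skipp(L)$ being Borel. The only place where one might worry is the König's Lemma step defining $\ell_i$, but that argument uses merely the observation that plays with cofinitely many $\skippsym$'s are not in $\skipp(L)$, which is built into the definition of $\skipp$ and holds for arbitrary $L$. The construction of $f$, $\stratO$, and $\stratI$ from $\stratO'$ and $\stratI'$, as well as the inductive consistency arguments, are purely combinatorial and make no use of the topological complexity of $L$.
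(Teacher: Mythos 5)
Your proposal is correct and matches the paper's own argument: the paper proves this corollary simply by observing that the second half of the proof of Theorem~\ref{thm_universal_I} (translating a winning strategy for Player~$O$ in $\Gamma(\skipp(L))$ into a winning strategy in $\delaygame{L}$ for a suitably constructed $f$ via the $\ell_i$/K\"onig's Lemma argument, and a winning strategy for Player~$I$ into an omnipotent h.t.~strategy via the $\skippsym$-padding construction) never uses that $L$ is Borel, only the determinacy of $\Gamma(\skipp(L))$, which is now assumed. You spell out exactly that reuse, so nothing further is needed.
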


It is open whether these results hold for i.o.t.~strategies as well. This is related to the strictness of the strategy hierarchy mentioned earlier: is there a winning condition~$L$ such that Player~$I$ has an omnipotent h.t.~strategy for $L$, but no omnipotent i.o.t.~strategy? We discuss this question in Section~\ref{sec_char}.

To conclude this section, let us consider the case where Player~$O$ wins $\delaygame{L}$ for every delay function~$f$. Here, we apply a monotonicity argument: the larger the lookahead is, the more information Player~$O$ has at her disposal, which makes winning easier for her. Thus, if she wins w.r.t.\ every delay function, then she wins in particular without lookahead. A winning strategy for the delay-free game can be turned into a winning strategy w.r.t.\ every \textit{larger} delay function. Thus, the omnipotent strategy for Player~$O$ mimics the behavior of a winning strategy for the delay-free game and ignores the additional information given by the lookahead.

Formally, we order delay functions by the amount of lookahead available for Player~$O$ at every round: we define $f \sqsubseteq f'$, if and only if $\sum_{j=0}^i f(j) \le \sum_{j=0}^i f'(j)$ for every $i \in \nats$, i.e., in every round, the lookahead granted by $f'$ is at least as large as the one granted by $f$. A winning strategy for Player~$O$ w.r.t.\ $f$ can easily be turned into one for $f'$ by ignoring the additional information. Thus, we obtain the following monotonicity property.

\begin{remark}
If $f \sqsubseteq f'$ and Player~$O$ wins $\delaygame{L}$, then also $\delaygamep{L}$.
\end{remark}

Note that \emph{winning} refers to winning strategies that may depend on $f$ respectively $f'$. Nevertheless, we can use monotonicity to obtain omnipotent strategies by considering the $\sqsubseteq$-minimal delay function. More formally, if Player~$O$ wins $\delaygame{L}$ for every $f$, then she wins in particular the delay-free game~$\Gamma(L)$, i.e., the game w.r.t.\ the $\sqsubseteq$-minimal delay-function~$i \mapsto 1$. It is easy to see that a winning strategy~$\stratO'$ for Player~$O$ in $\Gamma(L)$ can be turned into an omnipotent r.c.~strategy~$\stratO$ for $L$: defining $\stratO(x,i) = \stratO'(x(0)\cdots x(i-1))$ simulates the strategy~$\stratO'$ by ignoring the additional information gained due to the lookahead.

\begin{theorem}
\label{thm_universal_O}
Either, Player~$I$ wins $\delaygame{L}$ for some $f$ or Player~$O$ has an omnipotent r.c.~strategy for $L$. 
\end{theorem}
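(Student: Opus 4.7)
The plan is to prove the contrapositive direction: assuming Player~$I$ does not win $\delaygame{L}$ for any delay function~$f$, I will construct an omnipotent r.c.~strategy for Player~$O$. The starting point is the monotonicity observation preceding the theorem: since $i \mapsto 1$ is the $\sqsubseteq$-minimal delay function, the hypothesis excludes in particular a winning strategy for Player~$I$ in the delay-free game~$\Gamma(L)$. Invoking determinacy of $\Gamma(L)$---available via the Borel Determinacy Theorem in the Borel setting of this section---Player~$O$ has a winning strategy $\stratO' \colon \SigmaI^* \rightarrow \SigmaO$ in~$\Gamma(L)$, which I take as the seed of the construction.

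From $\stratO'$ I define the candidate r.c.~strategy $\stratO \colon \SigmaI^* \times \nats \rightarrow \SigmaO$ by $\stratO(x,i) = \stratO'(x(0)\, x(1) \cdots x(i))$, i.e., Player~$O$ discards the lookahead beyond position~$i$ and plays as $\stratO'$ would on the first $i+1$ letters of Player~$I$'s concatenated moves. To verify omnipotency, I fix an arbitrary delay function~$f$ and an arbitrary play $(u_0,v_0)(u_1,v_1)(u_2,v_2)\cdots$ of $\delaygame{L}$ consistent with $\stratO$, writing $\alpha = u_0 u_1 u_2 \cdots$ and $\beta = v_0 v_1 v_2 \cdots$. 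The main step is to observe that the induced delay-free play ${\alpha(0) \choose v_0}{\alpha(1) \choose v_1}{\alpha(2) \choose v_2}\cdots$ of $\Gamma(L)$ is consistent with $\stratO'$: by construction, $v_i = \stratO(u_0 \cdots u_i, i) = \stratO'(\alpha(0) \cdots \alpha(i))$, which is precisely the consistency condition for $\stratO'$ in $\Gamma(L)$. Since $\stratO'$ is winning, ${\alpha \choose \beta} \in L$, so $\stratO$ wins $\delaygame{L}$ for every~$f$.

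The only conceptually non-routine step is the initial appeal to determinacy of~$\Gamma(L)$: without it, \myquot{Player~$I$ has no winning strategy} would not yield \myquot{Player~$O$ has one}, and the construction would have no seed. Everything else is bookkeeping aligning round~$i$ of $\delaygame{L}$ with round~$i$ of $\Gamma(L)$ via the first component of Player~$I$'s concatenated moves; the underlying $\sqsubseteq$-monotonicity intuition---that granting Player~$O$ strictly more lookahead than in $\Gamma(L)$ cannot hurt her---is already established before the theorem statement and makes the truncation recipe in the definition of $\stratO$ work for every~$f$.
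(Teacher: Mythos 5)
Your construction coincides with the paper's: take a winning strategy $\stratO'$ for Player~$O$ in the delay-free game~$\Gamma(L)$ and turn it into a round-counting strategy that ignores the lookahead, and verify that every play of $\delaygame{L}$ consistent with it projects to a play of $\Gamma(L)$ consistent with $\stratO'$. (Your indexing $\stratO(x,i)=\stratO'(x(0)\cdots x(i))$ in fact matches the paper's consistency convention $v_i=\stratO(u_0\cdots u_i)$; the paper's displayed formula $\stratO'(x(0)\cdots x(i-1))$ looks like an off-by-one slip.) Where you genuinely differ is in how the case split is discharged. The paper invokes no determinacy at this point: its argument starts from the hypothesis that Player~$O$ wins $\delaygame{L}$ for \emph{every} $f$, notes that this includes the $\sqsubseteq$-minimal function $i\mapsto 1$, and builds the omnipotent strategy from there; this implication holds for arbitrary $L$, which is what the remark after the theorem refers to, while the exhaustiveness of the disjunction is (implicitly) inherited from the fixed-$f$ determinacy result of Section~\ref{sec_shiftresults}. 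You instead argue by contraposition and appeal to Borel determinacy of $\Gamma(L)$ to obtain the seed strategy from the mere non-existence of a winning strategy for Player~$I$. Your route buys a clean proof of the literal dichotomy with a single, minimal use of Martin's theorem (only for $\Gamma(L)$, no $\shift_f$ machinery needed); its cost is scope: it covers only those $L$ for which $\Gamma(L)$ is determined (e.g., Borel $L$), so it does not deliver the paper's advertised arbitrary-$L$ reading. To be fair, the disjunction as literally stated cannot hold for arbitrary $L$ anyway: if $\Gamma(L)$ is undetermined, Player~$O$ has no omnipotent strategy (it would win the game w.r.t.\ $i\mapsto 1$), and Player~$I$ wins no $\delaygame{L}$ either, since a winning strategy of his for any $f$ truncates to one for $\Gamma(L)$ (the Player-$I$ analogue of the monotonicity remark). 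So your determinacy appeal is exactly what the exhaustive reading requires; just be aware that the part of the statement that is valid for arbitrary $L$ is the implication you prove after that step, not the dichotomy itself.
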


As shown in Theorem~\ref{thm_strict_O}, such a strategy has to have access to the number of rounds already played, the theorem does not hold for input-tracking strategies. Note however, that this results holds for arbitrary winning conditions~$L$.

A similar construction works if Player~$O$ does not have an omnipotent strategy for $L$, but wins $\delaygame{L}$ for some $f$. Then, she can simulate a winning strategy for $\delaygame{L}$ in $\delaygamep{L}$ for every $f' \sqsupseteq f$.

%%%%%%%%%%%%%%%%%%%%%%%%%%%%%%%%%%%%%%%%%%%%%%%%%%%%%%%%%%%%%%%%%%%
%%%%%%%%%%%%%%%%%%%%%%%%%%%%%%%%%%%%%%%%%%%%%%%%%%%%%%%%%%%%%%%%%%%

\section{Decidability}
\label{sec_dec}
In this section, we consider decision problems regarding omnipotent strategies, i.e., we are interested in determining whether a given player has an omnipotent strategy for a given winning condition~$L$. 

We begin with $\omega$-regular conditions represented by deterministic parity automata. 

\begin{theorem}\label{thm_dec_parity}
The following problems are $\exptime$-complete respectively in\newline $\np \cap \conp$:
\begin{enumerate}

	\item\label{thm_dec_parity_I}
 Given a deterministic parity automaton~$\aut$, does Player~$I$ have an omnipotent h.t.~strategy for $L(\aut)$?

	\item\label{thm_dec_parity_O} 
 Given a deterministic parity automaton~$\aut$, does Player~$O$ have an omnipotent r.c.~strategy for $L(\aut)$?
	
\end{enumerate} 
\end{theorem}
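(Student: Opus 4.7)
The plan is to use the characterisations from Section~\ref{sec_skipresults} to reduce each question to a computational problem whose complexity is already well understood, so that most of the work has already been done in earlier sections and in known literature.

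For part~\ref{thm_dec_parity_O}, I would apply Theorem~\ref{thm_universal_O} together with the monotonicity remark to conclude that Player~$O$ has an omnipotent r.c.~strategy for $L(\aut)$ if and only if she wins the delay-free game $\Gamma(L(\aut))$: the constant map $i \mapsto 1$ is the $\sqsubseteq$-minimal delay function, and monotonicity shows that a winning strategy for this $f$ lifts to every $f' \sqsupseteq f$, i.e.\ to every delay function. It therefore suffices to solve $\Gamma(L(\aut))$, which is the natural parity game on an arena of size~$\bigo(|\aut| \cdot |\SigmaI|)$ alternating the players' letter choices and tracking the current state of $\aut$, with priorities inherited from the coloring of $\aut$. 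Since parity games are solvable in $\np \cap \conp$ (indeed in $\up \cap \coup$ by Jurdziński), this yields the claimed upper bound.

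For part~\ref{thm_dec_parity_I}, I would combine Theorem~\ref{thm_universal_I} with the determinacy result for delay games w.r.t.\ fixed delay functions established in Section~\ref{sec_shiftresults}. Since $L(\aut)$ is $\omega$-regular and hence Borel, determinacy guarantees that for each $f$ exactly one player wins $\delaygame{L(\aut)}$; together with the obvious observation that an omnipotent h.t.~strategy for Player~$I$ precludes Player~$O$ from winning $\delaygame{L(\aut)}$ for any $f$, the dichotomy of Theorem~\ref{thm_universal_I} becomes an equivalence: Player~$I$ has an omnipotent h.t.~strategy for $L(\aut)$ iff Player~$O$ wins $\delaygame{L(\aut)}$ for no $f$. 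The latter is exactly the complement of the problem shown $\exptime$-complete in~\cite{KleinZimmermann14}, and since $\exptime$ is closed under complement, both the upper and the lower bound transfer. A matching $\exptime$ lower bound for part~\ref{thm_dec_parity_I} can, alternatively, be obtained by inspecting the reduction used in~\cite{KleinZimmermann14} and complementing the winning condition.

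The main obstacle is conceptual rather than technical, and it lies in passing from Theorem~\ref{thm_universal_I} (respectively Theorem~\ref{thm_universal_O}), which is stated as a dichotomy, to an equivalence usable as a decision criterion; this relies on the mutual exclusivity of the two alternatives, which in turn rests on Borel determinacy w.r.t.\ fixed delay functions from Section~\ref{sec_shiftresults}. Once this equivalence is in place, part~\ref{thm_dec_parity_O} becomes pure parity game solving, and part~\ref{thm_dec_parity_I} reduces directly to (the complement of) the \myquot{some~$f$} variant of delay games over deterministic parity automata.
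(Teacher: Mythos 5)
Your proposal is correct and follows essentially the same route as the paper: part~\ref{thm_dec_parity_I} is reduced via Theorem~\ref{thm_universal_I} (made into an equivalence using determinacy w.r.t.\ fixed delay functions) to the complement of the $\exptime$-complete problem of~\cite{KleinZimmermann14}, and part~\ref{thm_dec_parity_O} is reduced via Theorem~\ref{thm_universal_O} and monotonicity to solving the delay-free parity game~$\Gamma(L(\aut))$, which lies in $\np \cap \conp$. The only differences are cosmetic, e.g., your arena of size~$\bigo(\size{\aut}\cdot\size{\SigmaI})$ versus the paper's size~$2\size{\aut}$, which does not affect the complexity bounds.
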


\begin{proof}
\ref{thm_dec_parity_I}.) Due to Theorem~\ref{thm_universal_I}, Player~$I$ has an omnipotent h.t.~strategy for $L(\aut)$ if and only if there is no $f$ such that Player~$O$ wins $\delaygame{L(\aut)}$. 
Determining whether there is an $f$ such that Player~$O$ wins $\delaygame{L(\aut)}$ is $\exptime$-complete~\cite{KleinZimmermann14}. Hence, determinacy of $\omega$-regular delay games w.r.t.\ fixed delay functions and closure of $\exptime$ under complements yields the desired result. 

\ref{thm_dec_parity_O}.) Due to Theorem~\ref{thm_universal_O}, Player~$O$ has an omnipotent r.c.~strategy for $L(\aut)$ if and only if she wins $\Gamma(L(\aut))$. This game can be encoded as a parity game in an arena of size~$2\size{\aut}$ that has the same colors as $\aut$. The winner of this game is solvable in $\np \cap \conp$ (and even $\up \cap \coup$~\cite{Jurdzinski98}), which yields the desired result.
\end{proof}

An omnipotent r.c.~strategy for Player~$O$ can be implemented by a finite automaton with output of size~$\mathcal{O}(\size{\aut})$, e.g., if the input~$(x,n) \in \SigmaI^* \times \nats$ with $\size{x} \ge n$ is encoded as $x(0) \cdots x(n-1) \# x(n) \cdots x(\size{x}-1)$, where $\#$ is a fresh symbol. The states of the automaton are the vertices of the parity game constructed in the proof above and the output function is given by a positional winning strategy for this game. 

Now, we turn our attention to $\omega$-context-free winning conditions. Such languages are recognized by $\omega$-pushdown automata, classical pushdown-automata running on infinite words. We refer to~\cite{CohenG78} for detailed definitions. First, we consider deterministic automata.

\begin{theorem}\label{thm_dec_detpd}
The following problems are undecidable respectively $\exptime$-complete:
\begin{enumerate}

	\item\label{thm_dec_detpd_I} Given a deterministic $\omega$-pushdown automaton~$\aut$, does Player~$I$ have an omnipotent h.t.~strategy for $L(\aut)$?

	\item\label{thm_dec_detpd_O} Given a deterministic $\omega$-pushdown automaton~$\aut$, does Player~$O$ have an omnipotent r.c.~strategy for $L(\aut)$?
	
\end{enumerate} 
\end{theorem}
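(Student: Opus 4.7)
The strategy is to apply the dichotomy theorems of the previous section to reduce both decision problems to known results about delay-free games and delay games for deterministic $\omega$-context-free winning conditions, mirroring the approach used in Theorem~\ref{thm_dec_parity} for the $\omega$-regular case. Deterministic $\omega$-context-free languages lie low in the Borel hierarchy, so Theorem~\ref{thm_universal_I} applies, and combined with per-$f$ determinacy (which forces the two alternatives of its dichotomy to be mutually exclusive, since an omnipotent h.t.~strategy for Player~$I$ is in particular winning for each fixed $f$) it yields: Player~$I$ has an omnipotent h.t.~strategy for $L(\aut)$ if and only if there is no delay function~$f$ such that Player~$O$ wins $\delaygame{L(\aut)}$. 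The latter is exactly the solvability problem for delay games with deterministic $\omega$-context-free winning conditions, which is undecidable by~\cite{FridmanLoedingZimmermann11}. Since decidability is closed under complementation, part~\ref{thm_dec_detpd_I} follows immediately.

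For part~\ref{thm_dec_detpd_O}, Theorem~\ref{thm_universal_O}---which does not require Borelness---gives: Player~$O$ has an omnipotent r.c.~strategy for $L(\aut)$ if and only if she wins the delay-free game $\Gamma(L(\aut))$. Since $\aut$ is deterministic, we can encode $\Gamma(L(\aut))$ as a pushdown parity game by taking the synchronous product of the alternating two-vertex arena of $\Gamma(L(\aut))$ with $\aut$: the resulting game has size~$\bigo(\size{\aut})$ and the same colors as~$\aut$, and its winner coincides with that of $\Gamma(L(\aut))$. Walukiewicz's $\exptime$-completeness of pushdown parity games then yields the $\exptime$ upper bound. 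The winning r.c.~strategy can furthermore be implemented by a pushdown transducer by extracting it from a positional winning strategy in the product game.

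The main remaining obstacle is the matching $\exptime$ lower bound for part~\ref{thm_dec_detpd_O}, which must preserve determinism of the $\omega$-pushdown automaton. The plan is to reduce from an $\exptime$-hard problem on pushdown systems (such as the winner problem for pushdown parity games, or the emptiness problem for alternating pushdown automata): moves of the two players in the source game are encoded as fresh letters in $\SigmaI$ and $\SigmaO$, and a deterministic $\omega$-pushdown automaton then simulates the game step-by-step, using its stack to mimic the stack of the source game, accepting precisely those plays that are winning for Player~$O$ there. Determinism is preserved because every choice that could cause nondeterminism in $\aut$ is already resolved by a letter picked in the delay-free game, and the winner of $\Gamma(L(\aut))$ thus coincides with that of the source pushdown game.
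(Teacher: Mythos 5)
Your proposal is correct and takes essentially the same route as the paper: part~\ref{thm_dec_detpd_I} follows from Theorem~\ref{thm_universal_I} (deterministic $\omega$-context-free languages are Borel) together with the undecidability of whether Player~$O$ wins $\delaygame{L(\aut)}$ for some $f$~\cite{FridmanLoedingZimmermann11}, and part~\ref{thm_dec_detpd_O} reduces via Theorem~\ref{thm_universal_O} to solving the delay-free game~$\Gamma(L(\aut))$. The $\exptime$-completeness of solving such delay-free games with deterministic $\omega$-pushdown winning conditions---including the lower bound you treat as the remaining obstacle---is simply cited from Walukiewicz~\cite{Walukiewicz01} in the paper, so your sketched hardness reduction need not be carried out (and note that the mutual exclusivity of the two alternatives in the dichotomy is immediate, since at most one player can win a fixed game, independently of determinacy).
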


\begin{proof} Recall that delay games with 
winning conditions that are  recognized by deterministic \mbox{$\omega$-pushdown} automata and w.r.t.\ fixed delay functions are determined~\cite{FridmanLoedingZimmermann11}.

\ref{thm_dec_detpd_I}.) As in the $\omega$-regular case, Player~$I$ has an omnipotent h.t.~strategy for $L(\aut)$ if and only if there is no $f$ such that Player~$O$ wins $\delaygame{L(\aut)}$. 
Determining whether there is an $f$ such that Player~$O$ wins $\delaygame{L(\aut)}$ is undecidable~\cite{FridmanLoedingZimmermann11}. Hence, determinacy w.r.t.\ fixed delay functions implies undecidability of the problem. 

\ref{thm_dec_detpd_O}.) Again, as in the $\omega$-regular case, the problem can be reduced to solving the delay-free game~$\Gamma(L(\aut))$, which is $\exptime$-complete~\cite{Walukiewicz01}.
\end{proof}

As before, an omnipotent r.c.~strategy for Player~$O$ can be represented finitely by constructing a pushdown-automaton with output that implements a winning strategy for the delay-free game~$\Gamma(L(\aut))$, which can be constructed effectively~\cite{Walukiewicz01}.

To conclude, we consider non-deterministic $\omega$-pushdown automata.

\begin{theorem}\label{thm_dec_nondetpd}
The following problems are undecidable:
\begin{enumerate}

	\item\label{thm_dec_nondetpd_I} Given a non-deterministic $\omega$-pushdown automaton~$\aut$, does Player~$I$ have an omnipotent l.c. (i.o.t., h.t.)~strategy for $L(\aut)$?

	\item\label{thm_dec_nondetpd_O} Given a non-deterministic $\omega$-pushdown automaton~$\aut$, does Player~$O$ have an omnipotent r.c.~strategy for $L(\aut)$?
	
\end{enumerate} 
\end{theorem}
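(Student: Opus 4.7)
The plan is to reduce both parts from the universality problem for non-deterministic $\omega$-pushdown automata, which is undecidable (by a standard lifting of CFL universality on finite words via the encoding $L \mapsto L \cdot \#^\omega$). Given such an $\aut$ over an alphabet~$\Sigma$, I would take $\SigmaI = \Sigma$, $\SigmaO = \set{*}$ a singleton, and build a non-deterministic $\omega$-pushdown automaton~$\aut'$ over $\SigmaI \times \SigmaO$ recognizing $L' = \set{{\alpha \choose *^\omega} \mid \alpha \in L(\aut)}$ by simulating $\aut$ on the first component. Crucially, the singleton~$\SigmaO$ forces every move of Player~$O$ to be~$*$, so the outcome of any play is determined entirely by Player~$I$'s choices. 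The (mild) technical obstacle is to verify that Player~$I$ can realise a given target word~$\alpha \in \Sigma^\omega$ as an output using each of the three universal strategy notions; everything else reduces to bookkeeping.

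For part~\ref{thm_dec_nondetpd_I}, I would show that Player~$I$ has an omnipotent l.c.\ (equivalently i.o.t.\ or h.t.) strategy for $L'$ if and only if $L(\aut) \neq \Sigma^\omega$. For the ``if'' direction, given any~$\alpha \notin L(\aut)$, the l.c.~strategy defined by $\stratI(x, n) = \alpha(n)\,\alpha(n+1)\,\alpha(n+2)\cdots$ produces~$\alpha$ under every delay function~$f$: in round~$i$ the prefix of $\stratI(v_0 \cdots v_{i-1}, \sum_{j<i} f(j))$ of length~$f(i)$ is precisely the block $\alpha(\sum_{j<i}f(j))\cdots\alpha(\sum_{j \le i}f(j)-1)$, and concatenating these blocks over all~$i$ yields~$\alpha$. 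Analogous definitions give i.o.t.\ and h.t.\ implementations of the same behaviour. For the converse, instantiating any omnipotent strategy of the given type on the delay function $i \mapsto 1$ against Player~$O$'s forced responses yields an outcome ${\alpha \choose *^\omega}$ with $\alpha \notin L(\aut)$, so $L(\aut)$ cannot be universal. Since universality of $\omega$-PDAs is undecidable, so is part~\ref{thm_dec_nondetpd_I}.

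For part~\ref{thm_dec_nondetpd_O}, I invoke Theorem~\ref{thm_universal_O} together with the remark that the implication behind it holds for arbitrary~$L$: Player~$O$ has an omnipotent r.c.~strategy for $L'$ if and only if she wins the delay-free game~$\Gamma(L')$. Since $\SigmaO$ is a singleton, Player~$O$ has only the trivial strategy, so she wins $\Gamma(L')$ iff ${\alpha \choose *^\omega} \in L'$ for every $\alpha \in \Sigma^\omega$, iff $L(\aut) = \Sigma^\omega$. Hence the existence of an omnipotent r.c.~strategy is equivalent to universality of $L(\aut)$, again yielding undecidability. The only external input required in both parts is the undecidability of universality for non-deterministic $\omega$-pushdown automata.
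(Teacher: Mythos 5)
Your proposal is correct and follows essentially the same route as the paper: both reduce from the (un)decidable universality problem for non-deterministic $\omega$-pushdown automata, use a winning condition under which Player~$I$ wins exactly by producing some $\alpha \notin L(\aut)$ (so non-universality yields an omnipotent l.c./i.o.t./h.t.\ strategy that just outputs a fixed counterexample, while universality rules one out via the $f \equiv 1$ instance), and settle part~\ref{thm_dec_nondetpd_O} via Theorem~\ref{thm_universal_O} applied to the delay-free game. The only difference is cosmetic: you trivialize Player~$O$ with a singleton output alphabet and the condition $\set{{\alpha \choose *^\omega} \mid \alpha \in L(\aut)}$, whereas the paper uses the copy condition $I_{\!\aut} = \set{{\alpha \choose \alpha} \mid \alpha \in L(\aut)}$, with no substantive effect on the argument.
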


\begin{proof} Recall that the (non-)universality problem for non-deterministic $\omega$-push\-down automata is undecidable~(see, e.g.,~\cite{Finkel01}). Given such an automaton~$\aut$, we define the winning condition~$I_{\!\aut} = \set{{\alpha \choose \alpha} \mid \alpha \in L(\aut)}$, i.e., in order to win, Player~$I$ has to produce an $\alpha \notin L(\aut)$.

\ref{thm_dec_nondetpd_I}). We prove undecidability for the case of l.c.~strategies by a reduction from the non-universality problem. The other cases are proven similarly.

We claim that $L(\aut)$ is non-universal if and only if Player~$I$ has an omnipotent l.c.~strategy for $I_{\!\aut}$. Let $L(\aut)$ be non-universal, i.e., we can fix some $\alpha \notin L(\aut)$. Then, there is a l.c.~strategy for Player~$I$ that produces $\alpha$, independently of the moves of Player~$O$. Hence, this strategy is omnipotent for $I_{\!\aut}$.
Now, if $I_{\!\aut}$ is universal then Player~$O$ wins $\Gamma(I_{\!\aut})$ by just copying Player~$I$'s moves. Hence, Player~$I$ has no omnipotent strategy for $I_{\!\aut}$.

\ref{thm_dec_nondetpd_O}.) We claim that $L(\aut)$ is universal if and only if Player~$O$ has an omnipotent r.c.~strategy for $I_{\!\aut}$. Above, we have shown that $I_{\!\aut}$ being universal implies that Player~$O$ wins $\delaygame{I_{\!\aut}}$ for every $f$. Hence, due to Theorem~\ref{thm_universal_O}, Player~$O$ has an omnipotent r.c.~strategy for $I_{\!\aut}$. On the other hand, as seen above, if $I_{\!\aut}$ is non-universal, then Player~$I$ has an omnipotent strategy for $I_{\!\aut}$, which implies that $I_{\!\aut}$ has none. 
\end{proof}

It is open whether the problems asking for weaker types of omnipotent strategies are decidable. We discuss these problems in the next section.

%%%%%%%%%%%%%%%%%%%%%%%%%%%%%%%%%%%%%%%%%%%%%%%%%%%%%%%%%%%%%%%%%%%
%%%%%%%%%%%%%%%%%%%%%%%%%%%%%%%%%%%%%%%%%%%%%%%%%%%%%%%%%%%%%%%%%%%

\section{Characterizing the Existence of Omnipotent Strategies}
\label{sec_char}
In this section, we give a characterization of omnipotent strategies for delay games in terms of uniform strategies for delay-free games. We focus on the case of i.o.t.~strategies for Player~$I$, but the other cases are analogous. 

Fix some strategy~$\stratI \colon (\SigmaOskipp)^* \rightarrow \SigmaI$ and define the equivalence relation~$\approx_\stratI$ over $(\SigmaOskipp)^*$ via $x_0 \approx x_1$ if and only if $\size{x_0} = \size{x_1}$, $h(x_0) = h(x_1)$, and $\stratI(x_0') = \stratI(x_1')$ for all proper prefixes $x_0' \sqsubset x_0$ and $x_1' \sqsubset x_1$ with $\size{x_0'} = \size{x_1'}$. Thus, $x_0$ and $x_1$ are equivalent, if Player~$O$ has picked the same sequence of non-skip symbols in $x_0$ and in $x_1$, has picked the same number of skip symbols (but possibly at different positions), and $\stratI$ picked the same moves answering to Player~$O$ picking $x_0$ and $x_1$, respectively, during the previous rounds. 

Now, we say that a strategy~$\stratI$ for Player~$I$ in $\delaygame{\skipp(L)}$ is i.o.-uniform if 
$\stratI(x) = \stratI(x')$ for all $x \approx_\stratI x'$. The following lemma is a straightforward extension of Theorem~\ref{thm_universal_I}.

\begin{lemma}
\label{lem_uniform_iotrack}
Player~$I$ has an omnipotent i.o.t.~strategy if and only if Player~$I$ has an i.o.-uniform winning strategy for $\Gamma(\skipp(L))$.
\end{lemma}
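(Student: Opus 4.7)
The plan is to lift the proof of Theorem~\ref{thm_universal_I} to the i.o.t./i.o.-uniform setting. The guiding observation is that an i.o.-uniform strategy $\stratI'$ for $\Gamma(\skipp(L))$ bases its decisions only on the triple $(|x|, h(x), u_{\stratI'}(x))$, where $u_{\stratI'}(x) \in \SigmaI^{|x|}$ collects the moves that $\stratI'$ has produced so far against $x$---precisely the data available to an i.o.t.~strategy in a delay game.

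For the direction $(\Leftarrow)$, given an i.o.-uniform winning strategy $\stratI'$ for $\Gamma(\skipp(L))$, I mimic the construction from the proof of Theorem~\ref{thm_universal_I}. For $(y, u) \in \SigmaO^* \times \SigmaI^*$, I let $\stratI(y, u) \in \SigmaI^\omega$ be the sequence of Player~$I$-moves that $\stratI'$ would produce from any history $x$ realising $|x| = |u|$, $h(x) = y$, and $u_{\stratI'}(x) = u$, continued by Player~$O$ playing $\skippsym^\omega$; on triples $(y, u)$ not realised by any such $x$, I set $\stratI$ arbitrarily. i.o.-uniformity of $\stratI'$ ensures the value is independent of the chosen $x$, so $\stratI$ is well-defined. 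A straightforward induction on rounds then shows that for every delay function $f$ and every play $(\alpha, \gamma)$ of $\delaygame{L}$ consistent with $\stratI$, the play $(\alpha, \shift_f(\gamma))$ of $\Gamma(\skipp(L))$ is consistent with $\stratI'$; winningness of $\stratI'$ therefore gives $(\alpha, \shift_f(\gamma)) \notin \skipp(L)$, hence $(\alpha, \gamma) \notin L$, so $\stratI$ is omnipotent.

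For the direction $(\Rightarrow)$, given an omnipotent i.o.t.~strategy $\stratI$, I define $\stratI'$ by a step-wise simulation: at step $n = |x|$ with $k = |h(x)|$ non-skip letters in $x$, let $p$ be the position of the last non-skip in $x$ (setting $p := -1$ when $k = 0$) and set
\[ \stratI'(x) \;:=\; \stratI\bigl(h(x),\, u_{\stratI'}(x[0\,{\ldots}\,p])\bigr)\bigl[\,n - p - 1\,\bigr]. \]
This is exactly the letter Player~$I$ plays at position $n$ of $\alpha$ in a $\stratI$-consistent play of $\delaygame{L}$ whose delay function is given by the gaps between successive non-skips of $x$. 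A $\stratI'$-consistent play $(\alpha, \beta)$ of $\Gamma(\skipp(L))$ in which $\beta$ contains infinitely many non-skips therefore yields a $\stratI$-consistent play of $\delaygame{L}$, so omnipotence of $\stratI$ gives $(\alpha, h(\beta)) \notin L$ and hence $(\alpha, \beta) \notin \skipp(L)$; plays in which $\beta$ is eventually $\skippsym^\omega$ lie outside $\skipp(L)$ automatically.

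The main obstacle is verifying i.o.-uniformity of the $\stratI'$ defined above, since the formula nominally refers to the position $p$ of the last non-skip in $x$, which is not a function of $(|x|, h(x), u_{\stratI'}(x))$ alone. The resolution exploits the self-referential shape of $\approx_{\stratI'}$: by induction on length, if $x \approx_{\stratI'} x'$ then $u_{\stratI'}(x) = u_{\stratI'}(x')$ and $h(x) = h(x')$, and unpacking the defining formula on shorter prefixes shows that any mismatch of non-skip positions between $x$ and $x'$ would force, at some earlier step, a call of $\stratI$ with first arguments of different lengths, breaking the $\stratI'$-agreement demanded by $\approx_{\stratI'}$. Hence the non-skip positions of $x$ and $x'$ coincide, so $p = p'$, and both sides of the required equality $\stratI'(x) = \stratI'(x')$ reduce to the same invocation of $\stratI$.
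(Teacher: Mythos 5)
Your left-to-right direction (i.o.-uniform winning strategy $\Rightarrow$ omnipotent i.o.t.~strategy) is fine and is exactly the intended specialization of the h.t.-construction from Theorem~\ref{thm_universal_I}: well-definedness of the $\skippsym^\omega$-continuation follows because $x \approx_{\stratI'} x'$ implies $x\skippsym^k \approx_{\stratI'} x'\skippsym^k$ for every $k$ (a one-line induction you should state, but which does hold).

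The converse direction has a genuine gap. Your $\stratI'$ is indeed a winning strategy for $\Gamma(\skipp(L))$, but your argument that it is i.o.-uniform fails. You claim that a mismatch of non-skip positions between $x$ and $x'$ would force, at some earlier step, a call of $\stratI$ with different arguments and thereby break the agreement required by $\approx_{\stratI'}$. But calling $\stratI$ on different arguments does not break agreement: the two calls may return letters that coincide at exactly the positions that feed into $u_{\stratI'}$, in which case $u_{\stratI'}(x) = u_{\stratI'}(x')$ even though the non-skip positions differ, and then your formula evaluates $\stratI$ at genuinely different second arguments and index offsets. Concretely, take $\SigmaO = \set{b}$ and $L = \emptyset$, so that \emph{every} i.o.t.~strategy is omnipotent, and choose one with $\stratI(\epsilon,\epsilon) = a^\omega$, $\stratI(b,a) = ab^\omega$, and $\stratI(b,aa) = a^\omega$. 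Your recursion gives $\stratI'(\epsilon) = \stratI'(\skippsym) = \stratI'(b) = a$, hence $u_{\stratI'}(\skippsym b) = aa = u_{\stratI'}(b\skippsym)$ and $\skippsym b \approx_{\stratI'} b\skippsym$; yet $\stratI'(\skippsym b) = \stratI(b,aa)[0] = a$ while $\stratI'(b\skippsym) = \stratI(b,a)[1] = b$. So the induced $\stratI'$ is not i.o.-uniform. The underlying issue is that an arbitrary omnipotent i.o.t.~strategy need not be \emph{coherent}: nothing forces $\stratI(y,u)[m]$ to agree with $\stratI(y,\,u\cdot\stratI(y,u)[0\,{\ldots}\,m-1])[0]$, whereas an i.o.-uniform strategy sees only the concatenation and cannot recover the split point. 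To repair the direction you must either first replace $\stratI$ by a coherent omnipotent i.o.t.~strategy, or define $\stratI'$ on $\approx$-classes via canonical representatives and re-prove that the resulting plays still project onto $\stratI$-consistent plays of some $\delaygame{L}$ --- neither step is automatic, in particular because $\approx$-classes can merge as the play proceeds. This is precisely the part of the lemma that the paper leaves implicit under ``straightforward extension,'' so it is the part that actually needs an argument.
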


We conjecture that Player~$I$ always has such a uniform strategy. 

\begin{conjecture}
\label{conj}
If Player~$I$ wins $\Gamma(\skipp(L))$, then she has an i.o.-uniform winning strategy for $\Gamma(\skipp(L))$.	
\end{conjecture}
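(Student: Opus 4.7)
The plan is to construct an i.o.-uniform winning strategy $\stratI$ from a given winning strategy $\stratI^*$ for Player~$I$ in $\Gamma(\skipp(L))$ via a round-by-round inductive definition. At round $n$, the already-fixed values of $\stratI$ on length-$(<n)$ histories determine the relation $\approx_\stratI$ on $(\SigmaOskipp)^n$; i.o.-uniformity requires $\stratI$ to take a single value on each equivalence class. The natural first attempt is to pick a canonical representative $x^{\mathrm{can}}$ in each class and to define $\stratI(x) := \stratI^*(x^{\mathrm{can}})$, so that i.o.-uniformity is automatic from $x \approx_\stratI x'$ implying $x^{\mathrm{can}} = (x')^{\mathrm{can}}$.

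A first obstacle is that no \emph{coherent} family of canonical representatives exists. Concretely, any canonical choice $\phi$ assigning a length-$n$ representative $\phi(y,n) \in h^{-1}(y)$ to each $(y,n) \in \SigmaO^* \times \nats$ with $\size{y} \le n$ must satisfy both $\phi(y,n) \sqsubset \phi(y,n{+}1)$ (skip extension) and $\phi(y,n) \sqsubset \phi(ya,n{+}1)$ (letter extension) so that the canonical form grows along a real play. But $\phi(\epsilon,1) = \skippsym$ and $\phi(a,1) = a$ are forced (there is only one length-$1$ string with the given $h$-value), while $\phi(a,2)$ would have to extend both, which is impossible. Consequently, the sequence $(\stratI^*(\phi(h(\beta_{<n}),n)))_{n \in \nats}$ under any choice $\phi$ need not correspond to a coherent play of $\Gamma(\skipp(L))$, and one cannot directly inherit the winning property from $\stratI^*$.

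The main step, and the main obstacle, is therefore to reconcile these disparate $\stratI^*$-responses into one winning strategy. A promising direction is to pass to an auxiliary delay-free game $\Gamma'$ in which Player~$O$ is constrained a priori to a single fixed skip schedule, for instance: once a non-skip letter has been played, no further skips are allowed. The winning condition of $\Gamma'$ is still $\skipp(L)$, and since the restriction is a $\bPi_2$ condition while $\skipp(L)$ is Borel (as in the proof of Theorem~\ref{thm_universal_I}), $\Gamma'$ has a Borel winning condition and is therefore determined. A winning strategy for Player~$I$ in $\Gamma'$ depends only on Player~$O$'s non-skip letters and the round number and so directly induces an i.o.-uniform candidate $\stratI$ for $\Gamma(\skipp(L))$. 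The crux is then to show that Player~$O$'s freedom to delay skips in $\Gamma(\skipp(L))$ gives her no advantage over the restricted regime of $\Gamma'$, i.e., that Player~$I$ wins $\Gamma'$ whenever she wins $\Gamma(\skipp(L))$. A strategy-stealing argument exploiting that $\skipp(L)$ depends only on $h(\beta)$, not on the positions of the skips, is the natural tool; rigorously establishing this equivalence is the real technical content of a proof and precisely what keeps the conjecture open.
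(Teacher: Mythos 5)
There is no proof to compare against: the statement you are proving is Conjecture~\ref{conj}, which the paper explicitly leaves open (the authors state in the conclusion that they are still working on resolving it), and your proposal does not close it either --- you say yourself at the end that the decisive step is ``precisely what keeps the conjecture open.'' So what you have is a proof plan with an acknowledged hole, not a proof. The genuinely correct part is the first obstacle you isolate: there is no coherent family of canonical representatives, so one cannot simply define $\stratI(x) := \stratI^*(x^{\mathrm{can}})$ and inherit the winning property, because the canonical forms of the prefixes of a single play of $\Gamma(\skipp(L))$ do not form a play.

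The gap is in the second half. The statement you call the crux --- ``Player~$I$ wins $\Gamma'$ whenever she wins $\Gamma(\skipp(L))$'' --- is trivial and is not where the difficulty lies: $\Gamma'$ only \emph{restricts} Player~$O$'s behaviour, so any winning strategy for Player~$I$ in $\Gamma(\skipp(L))$ already wins $\Gamma'$ (in particular, the detour through Borel determinacy of $\Gamma'$ buys nothing, and the conjecture imposes no Borel assumption on $L$ anyway). The hard direction is the converse transfer: the strategy $\stratI'$ winning the restricted game $\Gamma'$ induces, via the canonical skip schedule $\skippsym^{\size{x}-\size{h(x)}}h(x)$, an i.o.-uniform candidate $\stratI$, and one must show that this candidate wins the \emph{unrestricted} game $\Gamma(\skipp(L))$, where Player~$O$ may intersperse skips arbitrarily. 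But a play against $\stratI$ in the unrestricted game queries $\stratI'$ along canonical histories that are pairwise incompatible (e.g.\ $a$ and $\skippsym a$), so Player~$I$'s answers along the real play are stitched together from \emph{different} plays of $\Gamma'$ --- exactly the coherence obstacle you identified for the naive approach, now resurfacing one level up. Your appeal to ``strategy stealing, exploiting that $\skipp(L)$ depends only on $h(\beta)$'' restates the intuition but is not an argument: although the winning condition ignores skip positions, the placement of skips changes how Player~$O$'s letters interleave with Player~$I$'s responses (i.e.\ how much of $\alpha$ she has seen when she commits to each letter), and ruling out that she can exploit an i.o.-uniform opponent by varying this placement is precisely the open content of the conjecture. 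As it stands, the proposal reduces the conjecture to an equivalent unproven claim rather than proving it.
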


Note that we do not impose any requirements on $L$. If the conjecture is true, then Theorem~\ref{thm_universal_I} is also true for i.o.t.~strategies.   

The existence of an omnipotent o.t., l.c., or i.t.~strategy for a winning condition~$L$ can be characterized analogously using appropriate equivalence relations that capture the limited access to information about the history of a play that such a strategy has. 

Furthermore, the existence of such uniform strategies can be expressed in the framework introduced by Maubert and Pinchinat~\cite{MaubertP14}: they investigate  infinite games under uniformity constraints on strategies expressed in an extension of LTL with a modality to equate finite play prefixes that are in some given equivalence relation. The logic is able to express the uniformity constraint formulated above, but our problems are not in the decidable fragment presented in this work, as the equivalence relations that characterize universal strategies are not rational (recognizable by an asynchronous transducer) and turning an $\omega$-regular~$L$ into $\skipp(L)$ does not preserve $\omega$-regularity.

%%%%%%%%%%%%%%%%%%%%%%%%%%%%%%%%%%%%%%%%%%%%%%%%%%%%%%%%%%%%%%%%%%%
%%%%%%%%%%%%%%%%%%%%%%%%%%%%%%%%%%%%%%%%%%%%%%%%%%%%%%%%%%%%%%%%%%%

\section{Conclusion}
\label{sec_conc}
We presented determinacy results for delay games with Borel winning conditions, both with and without respect to fixed delay functions: in the latter case, we showed the existence of omnipotent strategies, i.e., strategies that are winning w.r.t.\ every delay function. In particular, we analyzed the exact amount of information such a strategy needs about the history of the play and the delay function under consideration. For games w.r.t.\ a fixed delay function, on which winning strategies may depend, access to the opponent's moves is sufficient. However, for omnipotent strategies the situation is more intricate: Player~$O$ needs access to the opponent's moves and the number of rounds played thus far, just having access to the opponent's moves is not sufficient. For Player~$I$, we showed that access to both player's moves is necessary and having the full information about the play's history is trivially sufficient. However, it is open whether that much information is necessary: does access to both player's previous moves, but not to the delay function under consideration, suffice to implement an omnipotent strategy? To answer this question, we currently work on resolving Conjecture~\ref{conj}.

Also, we determined the precise computational complexity of decision problems of the following form for $\omega$-regular and $\omega$-context-free winning conditions: given a winning condition $L$, does Player~$p \in \set{I,O}$ have an omnipotent strategy for $L$? 

Another interesting question concerns the decision problems left open in Section~\ref{sec_dec}: can one decide if Player~$I$ has an omnipotent o.t.\ (l.c., i.o.t.)~strategy for a given $\omega$-regular winning condition? The analogous question for Player~$O$ and input-tracking strategies is also open. Furthermore, we left open the finite representability of omnipotent strategies for Player~$I$ for $\omega$-regular winning conditions. We expect the techniques we developed to give an exponential-time algorithm for solving $\omega$-regular delay games~\cite{KleinZimmermann14} to yield such strategies, but this is beyond the scope of this paper. 

Another interesting open problem is to develop a theory of finite-state and positional winning strategies for delay games, both for the case with a fixed delay function and the universal case, and to prove positional respectively finite-state determinacy results.\medskip

\noindent\textbf{Acknowledgments.} The work presented here was initiated by a discussion with Dietmar Berwanger at the Dagstuhl Seminar \myquot{Non-Zero-Sum-Games and Control} in 2015.

\bibliographystyle{splncs03}
\bibliography{biblio}

\newcommand{\noopsort}[1]{}
\begin{thebibliography}{10}
\providecommand{\url}[1]{\texttt{#1}}
\providecommand{\urlprefix}{URL }

\bibitem{CohenG78}
Cohen, R.S., Gold, A.Y.: $\omega$-computations on deterministic pushdown
  machines. J. Comput. Syst. Sci.  16(3),  275--300 (1978),
  \url{http://dx.doi.org/10.1016/0022-0000(78)90019-3}

\bibitem{EmersonJutla91}
Emerson, E.A., Jutla, C.S.: Tree automata, mu-calculus and determinacy
  (extended abstract). In: FOCS 1991. pp. 368--377. IEEE (1991)

\bibitem{Finkel01}
Finkel, O.: Topological properties of omega context-free languages. Theor.
  Comput. Sci.  262(1),  669--697 (2001),
  \url{http://dx.doi.org/10.1016/S0304-3975(00)00405-9}

\bibitem{Finkel13}
Finkel, O.: The determinacy of context-free games. J. Symb. Log.  78(4),
  1115--1134 (2013), \url{http://dx.doi.org/10.2178/jsl.7804050}

\bibitem{FridmanLoedingZimmermann11}
Fridman, W., L\"{o}ding, C., Zimmermann, M.: {D}egrees of {L}ookahead in
  {C}ontext-free {I}nfinite {G}ames. In: Bezem, M. (ed.) CSL 2011. LIPIcs,
  vol.~12, pp. 264--276. Schloss Dagstuhl - Leibniz-Zentrum f\"{u}r Informatik
  (2011)

\bibitem{HoltmannKaiserThomas12}
Holtmann, M., Kaiser, L., Thomas, W.: Degrees of lookahead in regular infinite
  games. LMCS  8(3) (2012)

\bibitem{HoschLandweber72}
Hosch, F.A., Landweber, L.H.: Finite delay solutions for sequential conditions.
  In: ICALP 1972. pp. 45--60 (1972)

\bibitem{Jurdzinski98}
Jurdzi\'{n}ski, M.: Deciding the winner in parity games is in $\up \cap \coup$.
  Inf. Process. Lett.  68(3),  119--124 (1998)

\bibitem{KleinZimmermann14}
{Klein}, F., {Zimmermann}, M.: {How Much Lookahead is Needed to Win Infinite
  Games?} ArXiv:1412.3701  (2014), under submission.

\bibitem{Martin75}
Martin, D.A.: Borel determinacy. Annals of Mathematics  102,  363--371 (1975)

\bibitem{MaubertP14}
Maubert, B., Pinchinat, S.: A general notion of uniform strategies. {IGTR}
  16(1) (2014), \url{http://dx.doi.org/10.1142/S0219198914400040}

\bibitem{McNaughton00}
McNaughton, R.: Playing infinite games in finite time. In: Salomaa, A., Wood,
  D., Yu, S. (eds.) A Half-Century of Automata Theory. pp. 73--91. World
  Scientific (2000)

\bibitem{Mostowski91}
Mostowski, A.: Games with forbidden positions. Tech. Rep.~78, University of
  Gda\'nsk (1991)

\bibitem{DBLP:conf/rex/ThomasL93}
Thomas, W., Lescow, H.: Logical specifications of infinite computations. In:
  de~Bakker, J.W., de~Roever, W.P., Rozenberg, G. (eds.) A Decade of
  Concurrency, Reflections and Perspectives, {REX} School/Symposium. LNCS, vol.
  803, pp. 583--621. Springer (1993)

\bibitem{trakhtenbrot1973finite}
Trakhtenbrot, B., Barzdin, I.: Finite Automata; Behavior and Synthesis.
  Fundamental Studies in Computer Science, V. 1, North-Holland Publishing
  Company; New York: American Elsevier (1973)

\bibitem{Walukiewicz01}
Walukiewicz, I.: Pushdown processes: Games and model-checking. Inf. Comput.
  164(2),  234--263 (2001)

\bibitem{Zermelo13}
Zermelo, E.: {\"u}ber eine {A}nwendung der {M}engenlehre auf die {T}heorie des
  {S}chachspiels. In: Proceedings of the Fifth Congress of Mathematicians, Vol.
  2. pp. 501--504. Cambridge Press (1913)

\bibitem{Zimmermann14}
Zimmermann, M.: Delay games with {WMSO+U} winning conditions (2014),
  arXiv:1412.3978. Accepted for publication at CSR 2015.

\end{thebibliography}

\end{document}